\newcommand{\norm}[1]{\left\| #1 \right\|}
\newcommand{\PE}{\mathbb{P}}
\newcommand{\prob}{\mathbb{P}}
\newcommand{\R}{\mathbf{R}}
\newcommand{\NN}{\mathbb{N}}
\newcommand{\vS}{\mathbf{S} }
\newcommand{\N}{\mathcal{N} }
\newcommand{\cM}{{\mathcal M}}
\DeclareMathOperator*{\argmax}{arg\,max}
\theoremstyle{plain}
\newtheorem{theorem}{Theorem}
\newtheorem{corollary}{Corollary}
\newtheorem{remark}{Remark}
\newtheorem{proposition}{Proposition}
\newtheorem{lemma}{Lemma}
\theoremstyle{definition}
\newtheorem{assumption}{Assumption}
\theoremstyle{remark}
\newcommand{\formula}[2][nolabel]
{\ifthenelse{\equal{#1}{nolabel}}
 {\begin{align*} #2 \end{align*}}
 {\ifthenelse{\equal{#1}{}}
  {\begin{align} #2 \end{align}}
  {\begin{align} \label{#1} #2 \end{align}}
 }
}
\numberwithin{equation}{section}
\title{VARCLUST: clustering variables using dimensionality reduction}
\author{Piotr Sobczyk\thanks{OLX group in Poland} \and 
        Stanis\l aw Wilczy\'nski\thanks{Microsoft Development Center, Oslo, Norway} \and
        Ma\l gorzata Bogdan\thanks{Institute of Mathematics, University of Wroc\l aw, Poland} 
				\thanks{Department of Statistics, Lund University, Sweden} \and
				Piotr Graczyk\thanks{Laboratoire Angevin de Recherche en Math\'ematiques (LAREMA),  Université d'Angers, France}  \and
        Julie Josse \thanks{INRIA Montpellier and Ecole Polytechnique in Paris, France} \and
				Fabien Panloup\footnotemark[5]\thanks{SIRIC ILIAD, Nantes, Angers, France}  \and
				Val\'erie Seegers \thanks{ICO Angers, France} \and
        Mateusz Staniak \footnotemark[3]\thanks{Department of Computer Science, Northeastern University, USA}}
\providecommand{\keywords}[1]
{
  \small	
  \textbf{\textit{Keywords---}} #1
}
\begin{document}

\maketitle

\begin{abstract}
VARCLUST algorithm is proposed for clustering variables under the assumption that variables in a given cluster are linear combinations of a small number of hidden latent variables, corrupted by the random noise.
The entire clustering task is viewed as the problem of selection of the statistical model, which is defined by the number of clusters, the partition of variables into these clusters and the 'cluster dimensions', i.e. the  vector of dimensions of linear subspaces spanning each of the clusters. The ``optimal'' model is selected using the approximate Bayesian criterion based on the Laplace approximations and
using a non-informative uniform prior on the number of clusters.
To solve the problem of
the search over a huge space of possible models we propose an extension of the {\it ClustOfVar}
algorithm of \cite{COV,COV2}, which was dedicated to subspaces of dimension only 1, and which is similar in structure to the $K$-centroids algorithm.
We provide a complete methodology with theoretical guarantees, extensive numerical experimentations, complete data analyses and implementation. Our algorithm assigns variables to appropriate clusterse based on the consistent Bayesian Information Criterion (BIC), and estimates the dimensionality of each cluster by the PEnalized SEmi-integrated Likelihood Criterion (PESEL) of \cite{sobczyk_pesel}, whose consistency we prove. Additionally, we prove that each iteration of our algorithm leads to an increase of the Laplace approximation to the model
posterior probability and provide the criterion for the estimation of the number of clusters. Numerical comparisons with other algorithms show that VARCLUST may outperform some popular machine learning tools for sparse subspace clustering. We also report the
results of real data analysis including TCGA breast cancer data and meteorological data,
which show that the algorithm can lead to meaningful clustering.
The proposed method is implemented in the publicly available \texttt{R} package varclust.

\keywords{subspace clustering, dimensionality reduction, principal components analysis,  Bayesian Information Criterion, $k$-centroids}
% \PACS{PACS code1 \and PACS code2 \and more}
% \subclass{MSC code1 \and MSC code2 \and more}
\end{abstract}

\section{Introduction}
\label{intro}
Due to the rapid development of measurement and computer technologies, large data bases
are nowadays stored and explored in many fields of industry and science.
 This in turn triggered development of new statistical methodology for acquiring information from such large data.

In large data matrices it usually occurs that many of the variables are strongly correlated and in fact convey a similar message. Principal Components Analysis (PCA) \cite{Pearson1901,Hotelling1933,Hotelling1936,pca} is one of the most popular and powerful methods for data compression. This dimensionality reduction method recovers the low dimensional structures spanning the data.  
The mathematical hypothesis which is assumed for this procedure is based upon the belief that
the denoised data matrix is of a low rank, i.e. that the data matrix $X _{n \times p}$ can be represented as
\begin{equation}\label{pesel model}
X = M + \mu + E,
\end{equation}
where $M$ is deterministic,  ${\rm rank}(M)  \ll  \min(n,p)$,
%\ll
  the mean matrix $\mu$ is rank one  and the matrix $E$ represents a centered normal  noise.

Thus, PCA model assumes that all data points come from the
same low dimensional space, which is often unrealistic. 
Fortunately, in many unsupervised learning applications it can be assumed that the data can be well
approximated by a union of lower dimensional manifolds.  One way to analyze such data is to apply 
the nonlinear data projection techniques (see \cite{L_07}). Another approach is to combine local linear models, which 
can often effectively approximate the low dimensional manifolds 
(see e.g., \cite{HS_98}). 
Therefore, in recent years we have witnessed a rapid development of machine learning and statistical technoques for subspace clustering (see \cite{Vidal2011}, \cite{Emmanuel1} and references therein), i.e. for clustering the data into  multiple low dimensional subspaces. As discussed in \cite{Emmanuel1} these techniques have been successfully used in fields as diverse as computer vision (see e.g, \cite{ssc,ssc2}), identification and classification of diseases \cite{scexp} or music analysis \cite{scmusic}, to name just a few.

The most prominent statistical method for subspace clustering is the mixture
of probabilistic principal component analyzers (MPPCA) \cite{mppca}. The statistical model of MPPCA assumes that the rows of the data matrix are independent identically distributed random vectors from the mixture of multivariate gaussian distributions with the low rank covariance matrices. The mixture parameters are usually estimated using the Expectation Maximization algorithm, which is expected to work well when $n>>p$. 

In this paper we propose a more direct approach for subspace clustering, where the fixed effects model (\ref{pesel model}) is applied separately for each cluster.  The statistical model for the whole data base is
determined by the partition of variables into different clusters and the vector of dimensions (ranks of the corresponding $M$ matrices) for
each of the clusters. Our approach allows for creating clusters with the number of variables substantially larger than $n$.
%From a practical (but also statistical)  point of view, the aim of these extensions of PCA-type algorithms is to get a better low dimensional representation of the whole data set, which in turns  should provide some better supervised classification algorithms based on these data. 
%Note that in the biological paradigm, such assumption on the model is natural, due to the existence of well defined \textit{biological pathways}, $i.e.$ of interactions between specific sets of molecules which lead to some changes in the cell. \\
 %\cite{ssc}%n recent years, numerous algorithms have been developed for 
%subspace clustering and especially sparse subspace clustering (see  and the references therein).		 

%extension of this model is
%\noindent In the continuity of \cite{sobczyk_pesel}, where the  BIC-type estimator  of the dimension of a given cluster (named PESEL) was introduced, 
The optimal subspace clustering model is identified through the modified Bayesian Information Criterion, based on the Laplace approximations to the model posterior probability. To solve the problem of
the search over a huge space of possible models we propose in Section \ref{sec:math-model-varclust} a VARCLUST algorithm, which is based on a novel $K$-centroids algorithm, made of two steps based on two different Laplace approximations.
In the first step, given a partition of the variables, we use PESEL \cite{sobczyk_pesel}, a BIC-type estimator,  to estimate the dimensions of each cluster. Cluster centroids are then represented by the respective number of principal components.
% In the first sub-step of the second step, we represent the respective subspaces by providing the appropriate number of first principal components for each  cluster. 
In the second step we perform the partition where the similarity between a given variable and cluster centroid  is measured by the Bayesian Information Criterion in the corresponding multiple regression model. 
\noindent {From a theoretical point of view}, we prove in Section \ref{sec:theory} the consistency of PESEL, $i.e.$ the convergence of the estimator of  the cluster dimension towards its true dimension (see Theorem \ref{thm:pesel}). 
For the VARCLUST itself, we show that our algorithm leads to an increase of the Laplace approximation to the model posterior probability%, up to Laplace approximations used in PESEL and BIC, 
 (see Corollary \ref{cor:main}).  
\noindent {From a numerical point of view, our paper investigates numerous issues in Section  \ref{ss}. %First, in Section \ref{ss}, we focus on simulated examples in order to test the performance of the algorithm. 
The convergence of VARCLUST is empirically checked  and some comparisons with other algorithms are provided showing that the VARCLUST algorithm seems to have the ability to retrieve the true model with an higher probability than other popular sparse subspace clustering procedures. Finally, in Section \ref{sec:realdata}, we consider two important applications to breast cancer and meteorological data. Once again, in this part, the aim is twofold: reduction of dimension but also identification of groups of genes/indicators which seem to take action together.  In Section \ref{sec:package}, the R package \texttt{varclust}  which uses parallel computing for computational efficiency is presented and its main functionalities are detailed.}
\noindent

\section{VARCLUST model} \label{sec:math-model-varclust}

\subsection{A low rank model in each cluster} 

$\;$\\
Let $X_{n\times p}$ be the data matrix with $p$ columns $x_{\bullet j}\in\R^n$, $j\in \{1,\ldots,p\}$. 
The clustering of  $p$ variables $x_{\bullet j}\in\R^n$
 into $K$ clusters consists in considering a column-permuted matrix $X'$ and decomposing
\begin{align}\label{Xcluster}
X'=\left[ X^1 | X^2|\ldots | X^K\right]
\end{align}
such that each bloc $X^i$ has dimension $n\times p_i$, with $\sum_{i=1}^K p_i=p$.
In this paper we apply to each cluster $X^i$
 the model \eqref{pesel model}: 
\begin{equation}\label{cluster model}
X^i = M^i + \mu^i + E^i,
\end{equation}
where $M^i$ is deterministic,  $rank(M^i) = k_i \ll min(n,p_i)$, the mean matrix $\mu^i$ is rank one  and the matrix $E^i$ represents the centered normal  noise $N(0,\sigma^2_i Id)$. 

As explained in \cite{sobczyk_pesel}, the form of  the rank one  matrix $\mu^i$
depends on the relation between $n$ and $p_i$.
When $n>p_i$, the $n$ rows of the  matrix $\mu^i$
are identical, i.e.
$\mu^i= 
\begin{pmatrix}
{\bf r}^i\\
\vdots\\
{\bf r}^i
\end{pmatrix}
$
where ${\bf r}^i=  (\mu_1^i,\ldots, \mu_{p_i}^i)$.
If  $n \le p_i$,  the  $p_i$ columns of the  matrix $\mu^i$
are identical, i.e.
$\mu^i= \begin{pmatrix}
{\bf c}^i \ldots 
{\bf c}^i
\end{pmatrix}
$ with  ${\bf c}^i=(\mu_1^i,\ldots, \mu_{n}^i)^\top$.
We point out that our modeling  allows
some clusters to have $p_i\ge n$, whereas in other clusters $p_i$ maybe smaller than  $n $. This flexibility is one of important advantages of the VARCLUST model.

Next we decompose each matrix $M^i$, for $i=1,\ldots,K$, as a product
\begin{equation}\label{factor model}
M^i=F_{n\times k_i}^i C_{k_i\times p_i}^i
 \end{equation}
The columns of $F_{n\times k_i}$ are linearly
 independent and
will be called factors (by analogy to PCA).\\

This model extends the classical model
\eqref{pesel model} for PCA, which assumes that all variables in the data set can be well approximated by a linear combination of just a few hidden ''factors'', or, in other words, the data belong to a low dimensional linear space. Here we assume that  
 the data comes from a union of such low dimensional subspaces. This means that the variables (columns of the data matrix $X$) can be divided into clusters $X^i$, each of which corresponds to variables from one of the subspaces. Thus, we assume that every variable in a single cluster can be expressed as a linear combination of small number of factors (common for every variable in this cluster) plus some noise. This leads to formulas  
 \eqref{cluster model} and  \eqref{factor model}.
Such a representation is clearly not unique.  
The goal of our analysis is clustering  columns in $X$ and in $M$, such that all coefficients in the matrices $C^1,\ldots,C^K$ are different from zero and $\sum_{i=1}^K k_i$ is minimized. \\

Let us summarize the model that we study. An element of $\mathcal{M}$ is defined by four parameters $(K,\Pi, \vec{k}, \PE_\theta)$
    where:
	 
\begin{itemize}
\item $K$ is the number of clusters and 
$K\le  K_{max}$  for a fixed $K_{max}\ll p$, 
\item $\Pi$ is a $K$-partition of $\{1,\ldots,p\}$ encoding a segmentation of variables (columns of the data matrix $X_{n\times p}$) into clusters $X^i_{n\times p_i}=:X_{\Pi_i}$,
\item $\vec{k} = (k_1, \ldots, k_K) \in\{1,\ldots, d\}^{\otimes K}$, 
  where $d$ is 
	the maximal dimension of
(number of factors in) a cluster.
We choose $d\ll n$ and $ d\ll p$.
%%%%%%%%%%%%%%%%%%%%%%%%%%%%%%%%%%%%%%%%%%%%%%%%%%%%%%%
\item $\PE_{\theta}$ is the probability law of the data specified by the vectors of  parameters $\theta=(\theta_1,\ldots,\theta_K)$,  with $\theta_i$ containing  the factor matrix $F^i$, the coefficient matrix $C^i$,  the 
rank one mean matrix $\mu^i$ and the error variance  $\sigma^2_i $, 
$$\PE_{\theta}(X)=\prod_{i=1}^K \PE\left(X_{\Pi_i}|\theta_i\right)$$
and $\PE\left(X_{\Pi_i}|\theta_i\right)$ is defined as follows: let ${x^i_{\bullet j}}$ be the  $j$-th variable in the $i$-th cluster and let $\mu^i_{\bullet j}$ be the $j$-th column of the
\text{ matrix\ } $\mu^i$. The vectors  ${x^i_{\bullet j}}$,
$j=1,\ldots, p_i$,  are independent conditionally on $\theta_i$ and it holds
\begin{equation}\label{law i}
{x^i_{\bullet j}}|\theta_i={x^i_{\bullet j}}|(F^i, C^i, 
\mu^i , \sigma^2_i) \quad \sim N(F^i  {c^i_{\bullet j}} + \mu^i_{\bullet j}, \sigma^2_iI_n)\;\;.
\end{equation}

\end{itemize}

 Note that according to the model \eqref{law i},
the vectors  ${x^i_{\bullet j}}|\theta_i$, $j=1,\ldots, k_i$,
in the same cluster $X^i$ have  the same covariance matrices
$\sigma^2_iI_n$.

%%%%%%%%%%%%%%%%%%%%%%%%%%%%%%%%%%%%%%%%%%%%%%%%%%%%%%%%%%%%%
%%%%%%%%%%%%%%%%%%%%%%%%%%%%%%%%%%%%%%%%%%%%%%%%%%%%%%%%%%%%%

\subsection{Bayesian approach to subspace clustering}

%Let us now  formulate  the goal of the VARCLUST method, using the Bayes paradigm.
To select a model (number of clusters, variables in a cluster and dimensionality of each cluster), we consider a Bayesian framework. 
 We assume that for any model ${\cM}$ the prior $\pi(\theta)$ is given by
 $$\pi(\theta)=\prod_{i=1}^K \pi(\theta_i)\;\;.$$
Thus, the $\log$-likelihood of the data $X$ given the model $\cM$ is given by
%%%%%%%%%%%%%%%%%%%%%%%%%%%%%%%%%%%%%%%%%%%%%%%%%%%%%%%%
\begin{align} 
\ln\left(\PE(X|\cM)\right) &= \ln\left(\int_{\Theta} \PE(X|\theta) \pi(\theta) d\theta\right) \nonumber\\
&=  \ln \prod_{i=1}^K \left( \int_{\Theta_i}\PE(X^i|\theta_i) \pi(\theta_i) d\theta_i\right) \nonumber\\
&=  \sum_{i=1}^K \ln\left(\int_{\Theta_i} \PE(X^i|\theta_i) \pi(\theta_i) d\theta_i\right) \nonumber\\ 
&= \sum_{i=1}^K \ln\left(\PE(X^i|\cM_i)\right), \label{prob}
\end{align}
where $\cM_i$ is the model for the $i$-th cluster $X^i$ specified by (\ref{factor model}) {and \eqref{law i}}.

%or, equivalently, its logarithm (the $\log$-likelihood  function $\log f_\mathcal{M}({X}^{obs})$.

%%%%%%%%%%%%%%%%%%%%%%%%%%%%%%%%%%%%%%%%%%%%%%%%%%%%%%%%%%%%%%

In our approach we propose an informative prior distribution on $\cM$. The reason is that in our case we have, for given $K$, roughly  $K^p$ different segmentations, where $p$ is the number of variables. Moreover, given  a maximal cluster dimension $d=d_{max}$, there are $d^K$  different selections of cluster dimensions. Thus, given $K$, there are approximately $K^p d^K$ different models to compare. This number quickly increases with $K$ and assuming that all models are equally likely we would in fact use a prior on the number of clusters $K$, which would strongly prefer large values of $K$. Similar problems were already observed when using BIC to select the multiple regression model based on a data base with many potential predictors. In \cite{mBIC_first_paper} this problem was solved by using the modified version of the Bayes Information Criterion (mBIC) with the informative sparsity inducing prior on $\cM$. Here we apply the same idea and use an approximately uniform prior on $K$ from the set $K\in \{1,\ldots, K_{max}\}$, which, for every model ${\cM}$ with the number of clusters $K$, takes the form:
%%%%%%%%%%%%%%%%%%%%%%%%%%%%%%%%%%%%%%%%%%%%%%%%%%%%%%%
\begin{align}
   \pi(\cM)&=\frac{C} {K^p d^K} \nonumber \\
   \ln(\pi(\cM)) &= -p\ln(K) - K\ln(d)+C\ , \label{prior}
\end{align}
where $C$ is a proportionality constant, that does not depend on the model under consideration.
Using the above formulas and the Bayes formula, we obtain the following Bayesian criterion for the model selection:
pick the model (partition $\Pi$ and cluster dimensions $\vec{k}$) such that 
\begin{align}
\ln(\PE(\cM|X)) &= \ln(\PE(X|\cM)) + \ln(\pi(\cM)) -\ln (\PE(X))
\label{bayes} \nonumber \\ 
&= \sum_{i=1}^K  \ln \PE(X^i|\cM_i) -p\ln(K) \\
& \qquad - K\ln(d) + C - \ln \PE(X) \nonumber \;\;.
  \end{align}
obtains a maximal value. 
Since $\PE(X)$ is the same for all considered models this amounts to selecting the model, which optimizes the following criterion
\begin{equation}\label{crit1}
C(\cM|X)= \sum_{i=1}^K 
\ln \PE(X^i|\cM_i) -p\ln(K) - K\ln(d)\;.
\end{equation}
The only quantity left to calculate in the above equation is $ \PE({X}^i|\cM_i)$.

\section{VARCLUST method} \label{sec:algo}
\subsection{Selecting the rank in each cluster with the PESEL method}\label{sec:pesel}
Before presenting the VARCLUST method,
let us present shortly the PESEL method,
introduced in  \cite{sobczyk_pesel} designed to estimate the number of principal components in PCA.
It will be used in the first step of the VARCLUST. 

As explained in  Section \ref{sec:math-model-varclust}
(cf. \eqref{law i}), our model for one cluster can be described by its set of parameters (for simplicity we omit the index of the cluster) $\theta: F \in \mathbb{R}^{n\times k} , c_{1}, \ldots, c_{p},$ where $c_i \in \mathbb{R}^{k\times 1} \text{ (vectors of coefficients)}, \sigma^2$ and $\mu$. In order to choose the best model we have to consider models with different dimensions, {\it i.e.} different values of $k$. The penalized semi-integrated likelihood (PESEL, \cite{sobczyk_pesel}) criterion is based on the Laplace approximation to the semi-integrated likelihood and in this way it shares some similarities with BIC. The general formulation of PESEL allows for using different prior distributions on $F$ (when $n>p$) or $C$ (when $p>n$). The basic version of PESEL uses the standard gaussian prior for the elements of $F$ or $C$ and has the following formulation, depending on the relation between $n$ and $p$. 

We denote by $(\lambda_j)_{j=1,\ldots, p}$
the non-increasing sequence of eigenvalues of the sample  covariance  matrix $S_n$. When $n \leq p$ we use the following form of the PESEL 
{\small
\begin{align}
&\ln(\PE(X^i|\cM_i)) \approx PESEL(p,k,n)=   \nonumber\\  
&-\frac{p}{2} \left[\sum^k_{j=1} \ln(\lambda_j) + (n-k)\ln\left(\frac{1}{n-k}\sum_{j=k+1}^n \lambda_j\right) + n\ln(2\pi)+n\right] \nonumber \\
& \qquad - \ln(p)\frac{nk- \frac{k(k+1)}{2} + k + n + 1}{2}
\label{bic}
\end{align}
}
 and when $n >p$ we use the form
{\small
\begin{align}
 &\ln(\PE(X^i|\cM_i))  \approx PESEL(n,k,p)= 
 \nonumber \\  &-\frac{n}{2}\left[\sum^k_{j=1} \ln(\lambda_j) + (p-k)\ln\left(\frac{1}{p-k}\sum_{j=k+1}^p \lambda_j\right) + p\ln(2\pi)+p\right] \nonumber \\
 & \qquad - \ln(n)\frac{pk- \frac{k(k+1)}{2} + k + p + 1}{2}\;\;. \label{bic2}
\end{align}
}
%The very detailed derivation of the PESEL criterion can be found in Appendix B of Stachu's paper.
The function of the eigenvalues $\lambda_j$ of $S_n$ appearing in
\eqref{bic},\eqref{bic2}  and  approximating the $\log$-likelihood  $\PE(X^i|\cM_i)$ is called a {\it PESEL function}.
The criterion consists in choosing the value of $k$ maximizing the PESEL function.

When $n>p$, the above form of PESEL coincides with BIC
in Probabilistic PCA (see \cite{Minka00automaticchoice}) or the spiked covariance structure model. These models assume that the rows of the $X$ matrix are i.i.d. random vectors. Consistency  results for PESEL under these  probabilistic
 assumptions can be found in
\cite{Bai2018}. 

In Section \ref{ssection:pesel} we will prove consistency of PESEL under a much  more general fixed effects model \eqref{eq:model_for_consistency}, which does not assume  the equality of laws of rows in $X$.

%Our model \eqref{eq:model_for_consistency}  does not assume , thus we have many more parameters in the model.  
%%%%%%%%%%%%%%%%%%%%%%%%%%%%%%%%%%%%%%%%%%%%%%%%%%%%

\subsection{Membership of a variable in a cluster with the BIC criterion}

To measure the similarity between $l^{th}$ variable and a subspace corresponding to $i^{th}$ cluster  we use the Bayesian Information Criterion.
Since the model
\eqref{law i} assumes that all elements of the error matrix $E^i$ have the same variance,
we can estimate $\sigma_i^2$ by  MLE 
$$
 \hat\sigma^2_i=\frac {\sum_{\ell\in \Pi_i} \|{x}_{ \bullet \ell}-P_i({x}_{ \bullet \ell})\|^2}{np_i}\;\;,
$$
where $P_i({x}_{ \bullet \ell})$ denotes the orthogonal projection of the column ${x}_{ \bullet \ell}$ on the linear space corresponding to the $i^{th}$ cluster
and next use BIC of the form
 \begin{equation}\label{bic:sigmaconst}
 BIC(l,i)= \frac{1}{2}\left(- \frac {\|{x}_{ \bullet \ell}-P_i({x}_{ \bullet \ell})\|^2}
{{\hat\sigma^2_i}} - {\ln n} \, k_i \right)\;\;.
\end{equation}

As an alternative, one can consider a standard multiple regression BIC, which allows for different variances in different columns of $E^i$:
 \begin{equation}\label{bic:rss}
BIC(l,i) = -n\ln\left(\frac{RSS_{li}}{n}\right) - k_i \ln(n),
\end{equation} 
 where $RSS_{li}$ is the residual sum of squares from regression of ${x}_{ \bullet \ell}$ on the basis vectors spanning $i^{th}$ cluster.

%%%%%%%%%%%%%%%%%%%%%%%%%%%%%%%%%%%%%%%%%%%%%%%%%

\subsection{VARCLUST algorithm}

\paragraph{Initialization and the first step of VARCLUST}
% {\bf The initialization by  choice of $\Pi^0$.}
%\\
%Deterministic initialization: take  "nearly equidistributed" $p^0_i=[p/K],
%i=1,\ldots,K-1, p^0_K=p-(K-1)[p/K]$. For example, when
%$p=50, K=3$, we have $[p/K]=16$ and $p^0=(16,16,18)$. 
%$\Pi^0$ consists in grouping together the consecutive $16,16,18$ columns of $X$.\\
Choose randomly a $K$-partition of $p=p^0_1+\ldots+p^0_K$
and group randomly $p^0_1,\ldots,p^0_K$ columns of $X$ to form $\Pi^0$.\\
Then, VARCLUST proceeds as follows:
\begin{equation}\label{firststep}
\Pi^0\ \to\ (\Pi^0,k^0)\ \to\ (\Pi^1,k^0)\;\;,
\end{equation}
where $k^0$ is computed by using  PESEL  $K$ times, separately
 to each matrix $X^i_0$, $i=1,\ldots, K$. Next, 
for each 	matrix $X^i_0$, PCA is applied to estimate $k^0_i$ principal factors $F^1_i$, which represent the basis spanning the subspace supporting $i^{th}$ cluster and the center of the clusters. 
The next partition $\Pi^1$ is obtained by using $BIC(l,i)$ as a measure of similarity between $l^{th}$ variable and $i^{th}$ cluster to allocate each variable to its closest cluster. 
After the first step of VARCLUST, we get the couple: the partition and the vector of cluster dimensions $(\Pi^1,k^0)$.\\[2mm]
%%%%%%%%%%%%%%%%%%%%%%%%%%%%
Other schemes of initialization can be consider such as a one-dimensional initialization. 
%Consider the canonical basis $(e_j)$ of $\R^p$. 
Choose randomly $K$ variables which will play the role of one dimensional centers of $K$ clusters . 
Distribute, by minimizing BIC, 
the $p$  columns of $X$ to form the first partition $\Pi^1$. In this way, after the first step of VARCLUST we again get $(\Pi^1, k^0)$, where $k^0$ is the $K$ dimensional all ones vector. 

\paragraph{ Step $m+1$ of VARCLUST} 
 In the sequel  we continue by first using PESEL to calculate a new vector of dimensions and next PCA and BIC to obtain the next partition:

 $$
(\Pi^{m},k^{m-1})\ \to\ (\Pi^{m},k^m)\ \to\ (\Pi^{m+1},k^m).
$$

\section{Theoritical guarentees}\label{sec:theory}

In this Section we prove the consistency of PESEL and show that each iteration of VARCLUST  asymptotically leads to an increase of the objective function (\ref{crit1}).

\subsection{Consistency of PESEL}\label{ssection:pesel}

In this section we prove that PESEL consistently estimates the rank of the denoised data matrix. The consistency holds when $n$ or $p$ diverges to infinity, while the other dimension remains constant. This result can be applied separately to each cluster $X^i$, $i\in\{1,\ldots,K\}$, of the full data matrix. 

First, we prove the consistency of PESEL (Section \ref{sec:pesel}) when $p$ is fixed as $n\rightarrow \infty$.

\begin{assumption}\label{as1}

Assume that the data matrix $X$ is generated according to the following probabilistic model  :\\
\begin{equation}\label{eq:model_for_consistency}
X_{n\times p}   = M_{n\times p} + {\mu}_{n \times p} + E_{n\times p},
\end{equation}
where 

	\begin{itemize}
	\item for each $n\in\NN$, matrices $ M_{n\times p}
	%=\vM(n)
	$ and $ {\mu}_{n \times p}$ are deterministic 
	\item ${\mu}_{n \times p}$ is a rank-one matrix in which all rows are identical, i.e. it represents average variable effect.
	\item the matrix $M_{n\times p}$ is  centered: $\sum_{i=1}^n M_{ij} = 0$ and $\text{rank}(M_{n\times p})=k_0$  for all $n\ge k_0$ 
	\item  the elements of matrix  $M_{n \times p}$  are bounded: $\sup_{n, i \in (1, \dots, n), j \in (1, \dots, p)} |M_{ij}| < \infty$
	\item  there exists the limit:
	$\lim_{n\to\infty}\frac{1}{n} M_{n\times p}^T M_{n \times p} =L$ and, for all $n$
%%%%%%%%%%%%%%%%%%%%%%%%%%%%%%%%%%%%%%
\begin{equation} \label{eq:pesel_consistency_M_assumption}
\left|\cfrac1n M_{n\times p}^T M_{n \times p} - L\right|< C \frac{\sqrt{2 \ln \ln n}}{\sqrt{n}}\;\;,
\end{equation}
where $C$ is some positive constant and $L=
	U D_{p \times p} U^T$  with
$$D_{p \times p}  =
	\begin{pmatrix}
diag[\gamma_i ]_{i=1}^{k_0} & 0\\ 0 &
diag[ 0   ] 
\end{pmatrix}
$$
with non-increasing $\gamma_i>0$ and $U^T U=Id_{p\times p}$. 
	\item the noise matrix $ E_{n\times p}$ consists of i.i.d.
	terms $e_{ij} \sim \N(0,\sigma^2)$.
	\end{itemize}
\end{assumption}

\begin{theorem}[Consistency of PESEL]\label{thm:pesel}

Assume that the data matrix  $X_{n\times p}$ satisfies the Assumption \ref{as1}. 
Let $\hat k_0(n) $ be the $\text{PESEL}(p,k,n)$ estimator of the rank of $M$.

Then, for $p$ fixed, it holds
$$
\prob(\exists {n_0} \ \forall {n > n_0}\quad \:  \hat k_0(n)= k_0) = 1.$$

\end{theorem}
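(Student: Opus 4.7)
The plan is to reduce the consistency question to the almost-sure asymptotic behaviour of the sample eigenvalues $\lambda_j$ of $S_n$, and then compare PESEL at $k_0$ against $k<k_0$ (under-fit) and $k>k_0$ (over-fit), using Jensen's inequality in the former case and a second-order Taylor expansion in the latter. First I would decompose, after centering (which absorbs the rank-one term $\mu$ up to an $O(1/n)$ perturbation),
\[
S_n = \tfrac{1}{n} M^T M + \tfrac{1}{n}(M^T E + E^T M) + \tfrac{1}{n} E^T E.
\]
The first summand converges to $L$ with the rate $O(\sqrt{\ln\ln n/n})$ built into assumption (4.2); the SLLN together with the law of the iterated logarithm for i.i.d.\ Gaussian products gives $\|\tfrac{1}{n} E^T E - \sigma^2 I_p\| = O(\sqrt{\ln\ln n/n})$ a.s.; and the cross terms, being sample means of products of bounded entries of $M$ with centred Gaussians, obey the same LIL rate. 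Weyl's perturbation inequality then yields $\lambda_j = \gamma_j\,\ind_{\{j\leq k_0\}} + \sigma^2 + O(\sqrt{\ln\ln n/n})$ a.s.

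For $k<k_0$, introduce the limiting likelihood functional
\[
f(k) := \sum_{j=1}^k \ln(\gamma_j+\sigma^2) + (p-k)\ln\!\left(\tfrac{1}{p-k}\sum_{j=k+1}^p (\gamma_j+\sigma^2)\right),
\]
with the convention $\gamma_j=0$ for $j>k_0$. Concavity of $\ln$ (Jensen's inequality) applied to the $p-k$ values grouped inside the average shows $f(k) > f(k_0)$ strictly, since the group mixes the strictly positive bulk values $\gamma_{k+1}+\sigma^2,\ldots,\gamma_{k_0}+\sigma^2$ with copies of $\sigma^2$. Hence the likelihood difference in PESEL between $k_0$ and $k$ is of order $\tfrac{n}{2}\delta$ for some $\delta>0$, which easily crushes the $O(\ln n)$ penalty difference. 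So PESEL at $k_0$ beats PESEL at $k$ a.s.\ for large $n$.

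For $k>k_0$ the limiting likelihoods coincide, and a finer analysis is required. Writing $\lambda_j = \sigma^2 + \epsilon_j$ for $j>k_0$ with $\epsilon_j = O(\sqrt{\ln\ln n/n})$ a.s., I would expand to second order each logarithm in $f(k)-f(k_0)$. Crucially, the first-order terms in $\epsilon_j$ cancel identically via the linear identity
\[
\sum_{j=k_0+1}^{k}\epsilon_j + (p-k)\bar\epsilon' = (p-k_0)\bar\epsilon,
\]
where $\bar\epsilon$ and $\bar\epsilon'$ are the noise-eigenvalue averages used at dimensions $k_0$ and $k$ respectively. What is left is a quadratic residual of order $O(\ln\ln n/n)$, so after multiplication by $n/2$ the PESEL likelihood difference is $O(\ln\ln n)$ a.s., while the BIC-type penalty difference favours the smaller model $k_0$ by an amount of order $(k-k_0)\ln n$. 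The $\ln n$ penalty beats the $\ln\ln n$ fluctuation, so $k_0$ wins. A union over the finite set $k\in\{1,\ldots,d_{\max}\}\setminus\{k_0\}$ then yields $\hat k_0(n) = k_0$ eventually a.s.

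The main obstacle lies in this over-fit step: the likelihood gap shrinks only at rate $\ln\ln n$, so in-probability control is not enough — I need almost-sure control of the noise-eigenvalue fluctuations at LIL rate, and the algebraic cancellation of the first-order $\epsilon$-terms has to be executed carefully. The hypothesis (4.2) (an LIL-type condition on the deterministic sequence $\tfrac{1}{n}M^T M$) combined with the LIL for Gaussian sample covariances is exactly what closes this gap.
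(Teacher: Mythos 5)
Your proposal is correct and follows essentially the same route as the paper's proof in the Appendix: an almost-sure LIL-rate bound $O(\sqrt{\ln\ln n/n})$ on the sample eigenvalues obtained from the decomposition of $S_n$ plus a Weyl-type perturbation inequality, a concavity-of-$\ln$ argument giving a strictly positive likelihood gap of order $n$ for under-fitting, and a second-order expansion with exact cancellation of the first-order terms giving an $O(\ln\ln n)$ fluctuation that is dominated by the $O(\ln n)$ penalty for over-fitting. The only cosmetic difference is that you compare each $k$ directly to $k_0$ while the paper establishes monotonicity of $F(n,k)$ via consecutive differences $G(k+1)-G(k)$; the substance is identical.
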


\noindent {\it Scheme of the Proof.}

Let us consider the sample covariance matrix
$$S_n = \cfrac{(X - \bar{X}) ^T (X- \bar{X})}{n}.$$ 
\\
and the population covariance matrix
$ \Sigma_n = E \left( S_n \right)$.
The idea of the proof is the following. 

%As  the value of the PESEL criterion, denoted by $\text{PESEL}_n(\vX, k)$ for a given number of principal components $k$ depends \textbf{only} on a sample covariance matrix $\vS_n$ and not on the data $\vX$ directly, we shall treat it as a function of sample covariance matrix.

Let us denote by $F(n,k)$ the  PESEL function in the case when $n>p$.
By \eqref{bic2}, we have
\begin{align*}
%\label{BIC_model_PPCA}
& F(n,k)=  \\
&-\frac{n}{2}
\Bigl[\sum^k_{j=1} \ln(\lambda_j) + (p-k)\ln\left(\frac{1}{p-k}\sum_{j=k+1}^p \lambda_j\right) \Bigr. \\
& \qquad + \Bigl.  p\ln(2\pi) + p \Bigr] \\
& \qquad - \ln(n)\frac{pk- \frac{k(k+1)}{2} + k + p + 1}{2}
\end{align*}
\\

The proof comprises two steps. First,  we quantify the difference between eigenvalues of matrices $S_n$, $\Sigma_n$ and $L$. We prove it to be bounded by the matrix norm of their difference, which goes to 0 at the pace $\frac{\sqrt{\ln \ln n}}{\sqrt{n}}$ as $n$ grows to infinity, because of the law of iterated logarith (LIL). We use the most general form of LIL from   \cite{petrov1995limit}. 
Secondly, we use the results from the first step to prove that for sufficiently large $n$
the PESEL function $F(n,k)$ 
 is increasing for $k < k_0$ and decreasing for $k>k_0$.
 To do this, the crucial Lemma \ref{lemmaLILeigen}    is proven and used.
The detailed proof is given in Appendix \ref{sec:appendix}.

Since the version of PESEL for $p>>n$, PESEL$(n,k,p)$, is obtained simply by applying  PESEL(p,k,n) to the transposition of $X$, Theorem \ref{thm:pesel} implies the consistency of PESEL also in the situation when $n$ is fixed, $p\rightarrow \infty$ and the transposition of $X$ satisfies the Assumption \ref{as1}.

\begin{corollary}
Assume that the transposition of the data matrix $X_{n\times p}$ satisfies the Assumption \ref{as1}. 
Let $\hat k_0(n) $ be the $\text{PESEL}(n,k,p)$ estimator of the rank of $M$.

Then, for $n$ fixed, it holds
$$
\prob(\exists {p_0} \ \forall {p > p_0}\quad \:  \hat k_0(p)= k_0) = 1.$$

\end{corollary}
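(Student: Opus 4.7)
The plan is to reduce the statement to Theorem \ref{thm:pesel} by transposition, exactly as the remark preceding the corollary suggests. First I would pin down the key symmetry: evaluating $\text{PESEL}(n,k,p)$ on $X$ is the same as evaluating $\text{PESEL}(p,k,n)$ on $X^T$. This amounts to checking that formulas \eqref{bic} and \eqref{bic2} are transforms of one another under the exchange of the two dimensions, together with the matching centering conventions (subtracting column means of $X$ in the $n>p$ case corresponds to subtracting column means of $X^T$, i.e.\ row means of $X$, in the $p\le n$ case applied to $X^T$). Once that is in place, the nonzero eigenvalues of the underlying Gram matrix that enter each PESEL formula coincide, and the whole expression yields the same number for $X$ and for $X^T$.

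Next I would apply Theorem \ref{thm:pesel} to $Y := X^T$. The hypothesis of the corollary is precisely that $Y$, regarded as a $p \times n$ data matrix in which $p$ plays the role of the diverging sample-size index and $n$ the fixed variable-count index, satisfies Assumption \ref{as1}: its deterministic part is $M^T$ with $\rk(M^T)=\rk(M)=k_0$; boundedness of entries and the centering condition are invariant under transposition; the convergence bound \eqref{eq:pesel_consistency_M_assumption} for $Y$ concerns $\frac{1}{p}M M^T$ and holds by hypothesis; and $E^T$ is still i.i.d.\ $\N(0,\sigma^2)$. Theorem \ref{thm:pesel} then yields a (random, almost surely finite) $p_0$ beyond which the $\text{PESEL}(p,k,n)$ estimator of $\rk(M^T)$ computed from $Y$ equals $k_0$ for all $p > p_0$. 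By the symmetry established in the previous paragraph, this estimator coincides with $\hat k_0(p)$, the $\text{PESEL}(n,k,p)$ estimator computed from $X$, which establishes the corollary.

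The only non-routine point I would have to nail down is the symmetry between $\text{PESEL}(n,k,p)$ on $X$ and $\text{PESEL}(p,k,n)$ on $X^T$. This is bookkeeping, but it merits care because the two formulas look asymmetric at first glance: the logarithmic argument in the second sum is normalized by $(n-k)$ in one and by $(p-k)$ in the other, and the BIC-type penalty scales with $\ln p$ versus $\ln n$. Once one verifies that, after the appropriate centering, both expressions boil down to the same functional of the common nonzero spectrum of the Gram matrix of the centered data, the remainder of the argument is a direct application of Theorem \ref{thm:pesel} to $X^T$.
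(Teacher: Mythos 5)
Your proposal is correct and follows exactly the paper's own route: the paper justifies this corollary with the single observation that $\text{PESEL}(n,k,p)$ on $X$ is $\text{PESEL}(p,k,n)$ applied to $X^T$, so Theorem \ref{thm:pesel} applies to the transpose, whose hypotheses are precisely those of the corollary. Your additional care about the centering conventions and the matching of the nonzero spectra under transposition is sound bookkeeping that the paper leaves implicit.
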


%%%%%%%%%%%%%%%%%%%%%%%%%%%%%%%%%%%%%%%%%%%%%%%%%%%%%%%%%%%%%%

\begin{remark} The above results assume that $p$ or $n$ is fixed. We believe that they hold also in the situation when $\frac{n}{p}\rightarrow \infty$ or vice versa. The mathematical proof of this conjecture is an interesting topic for a further research. These theoretical results justify the application of PESEL when $n>>p$ or $p>>n$. Moreover, simulation results reported in \cite{sobczyk_pesel} illustrate good properties of PESEL also when $p\sim n$. The theoretical analysis of the properties of PESEL when $p/n\to C \neq 0$ remains an interesting topic for further research.
\end{remark}
%%%%%%%%%%%%%%%%%%%%%%%%%%%%%%%%%%%%%%%%%%%%%%%%%%
\subsection{ Convergence of VARCLUST}

As noted above in \eqref{crit1}, the main goal of VARCLUST is identifying the model $\cM$ which maximizes, for a given dataset $X$,
$$\ln(\PE(\cM|X)) = \sum_{i=1}^K 
\ln \PE(X^i|k_i) + \ln(\pi(\cM))\;\;,$$
where $\ln(\pi(\cM))$ depends only on the number of clusters $K$ and the maximal allowable dimension of each cluster $d$.

Since, given the number of clusters $K$, the VARCLUST model is specified by the vector of cluster dimensions  $k=(k_1,\ldots, k_K)$  and a partition $\Pi=(\Pi_1,\ldots,\Pi_K)$ of $p$ variables into these $K$ clusters,
 our task reduces to identifying the model for which 
the following objective function
\begin{equation}\label{objective}
\varphi(\Pi, k)\ :=\sum_{i=1}^K 
\ln \PE(X^i|k_i)\;\;,
\end{equation}
obtains a maximum.

Below we will discuss consecutive steps of the VARCLUST Algorithm  with respect to the optimization of (\ref{objective}). 
Recall that the $m+1$ step of VARCLUST  is
 $$
(\Pi^{m},k^{m-1})\ \to\ (\Pi^{m},k^m)\ \to\ (\Pi^{m+1},k^m),
$$
where we first  use PESEL to estimate the dimension and next PCA to compute the factors and BIC to allocate variables to a cluster.

\begin{enumerate}

\item {\bf PESEL step: choice of cluster dimensions, for a fixed partition of $X$.}

First, observe that the dimension of $i^{th}$ cluster in the next $(m+1)^{th}$ step of VARCLUST is obtained as
$$k^m_i = \argmax_{k_i\in\{1,\ldots,d\}} PESEL(X^i|k_i)\;\;.$$
Thus,  denoting by $PESEL$ the PESEL function from \eqref{bic} and \eqref{bic2},
\begin{equation*}\label{peselpart1}
\sum_{i=1}^K   PESEL(X^i_m|k^m_i)\geq \sum_{i=1}^K   PESEL(X^i_m|k^{m-1}_i)\;\;.
\end{equation*}

Now, observe that under the standard regularity conditions  for the Laplace approximation (see e.g. \cite{BIC})
$$\ln \PE(X^i|k_i)=PESEL(X^i|k_i)+O_n(1)\;\;$$
 when $n\rightarrow \infty$ and $p_i$ is fixed and
$$\ln \PE(X^i|k_i)=PESEL(X^i|k_i)+O_{p_i}(1)\;\;$$
 when $p_i \rightarrow \infty$ and $n$ is fixed
(see \cite{sobczyk_pesel}).
Thus,
$$\varphi(\Pi, k)=\sum_{i=1}^K PESEL(X^i|k_i) + R\;\;,$$
where the ratio of $R$ over $\sum_{i=1}^K PESEL(X^i|k_i)$ converges to zero in probability, under our asymptotic assumptions.

Therefore, the first step of VARCLUST leads to an increase of 
$\varphi(\Pi, k)$ up to Laplace approximation, i.e. with a large probability when for all $i \in \{1,\ldots,K\}$, $n>>p_i$ or $p_i>>n$.

%%%%%%%%%%%%%%%%%%%%%%%%%%%%%%%%%%%%%%%%%%%%%%%%%%%%%%%%%

\item  {\bf PCA and Partition step: choice of a partition, with cluster dimensions $k^m_i$ fixed.}

 In the second step of the $m+1$-st iteration of VARCLUST,  the cluster dimensions $k^m_i$ are fixed, PCA is used to compute  the cluster centers $F^i$
and  the columns of $X$ are partitioned to different clusters by minimizing the BIC distance from $F^i$.

Below we assume that the priors $\pi_C(dC)$ and  $\pi(d\sigma)$ satisfy classical regularity conditions for Laplace approximation (\cite{BIC}).
Now, let us define the $k_i^m$--dimensional linear space through the set of respective directions $F^i=(F^i_1,\ldots,F^{i}_{k_i})$ with, as a  natural prior,  the uniform distribution $\pi_F$ on the compact Grassman manifold $F$ of free $k_i$-systems of $\mathbb{R}^n$. Moreover, we assume that the respective columns of coefficients $C^i=(C^i_1,\ldots,C^i_{k_i})$ are independent with a prior distribution $\pi_C$ on $\mathbb{R}^{p}$.

It holds  
\begin{align*}
\log \PE(X^i|k_i)&=\log \int_{F\times\sigma} \int_C \PE(X^i|F^i,C^i,\sigma_i) \\
& \qquad \qquad \qquad \pi(dC^i) \pi(d\sigma_i) \pi_F (dF^i)\\
&=\log \int_{F\times\sigma} \prod_{\ell\in \Pi^i} \int \PE(X_{\bullet \ell}|F^i,C_{\bullet \ell}) \\
& \qquad \qquad \qquad \pi_C (dC_{\bullet \ell})  \pi(d\sigma_i) \pi_F (dF^i).
\end{align*}

When $n\gg k_i$, a Laplace-approximation argument  leads to
$$\int \PE(X_{\bullet \ell}|F^i,C_{\bullet \ell}) \pi_C (dC_{\bullet \ell})\approx e^{{\rm BIC}_\ell |F_i, \sigma_i}$$
where
 $${\rm BIC}_\ell| F_i,\sigma_i=\frac{1}{2}\left(-\frac {\|{x}_{ \bullet \ell}-P_i({x}_{ \bullet \ell})\|^2}
{{\sigma_i^2}} -k_i {\ln n}\right).$$
 
Thus, thanks  to the Laplace approximation above, 
\begin{align}\label{integral0}
\log & \PE(X^i|k_i) \nonumber \\
& \approx \log \int_{F^i\times\sigma_i} e^{\sum_{\ell\in \Pi^i}{\rm BIC}_\ell |F_i, \sigma_i}\pi(d\sigma_i) \pi_F (dF^i)\;\;
\end{align}

and
\begin{align}\label{integral}
\sum_{i=1}^K & \log \PE(X^i|k_i) \nonumber \\
& \approx \log \int_{F\times\sigma} e^{\sum_{i=1}^K \sum_{\ell\in \Pi^i}{\rm BIC}_\ell |F_i, \sigma_i}\pi(d\sigma) \pi_F (dF)\;\;.
\end{align}

%= ln \int_F   e^{\sum BIC_\ell|F} dF
%&=\int_F e^{n L(F)}\pi(F) dF
%\end{align*}
%where $L(F)=\int_C \PE(x_{\bullet \ell}|F,C_{\bullet \ell}) \pi(C_{\bullet \ell}) d C_{\bullet \ell})$.
%& \approx \log \int_{F}  \prod_{\ell \in \Pi^i}  e^{BIC_\ell |F_i} dF_i= \log \int_{F^i}    e^{\sum BIC_l|F_i} dF_i\\
%&\approx \log \int_{F}  \prod_{\ell \in \Pi^i} e^{BIC_\ell |E_i} dF_i= ln \int_F   e^{\sum BIC_\ell|F} dF
%\end{align*}
%where ${\rm BIC}_\ell=\log \PE(x_{\bullet l}|\hat C^i, F^i) - 1/2 k_i \log n$. 

  Now, by Laplace approximation, when $p_i>>k_i$, the right-hand side of  \eqref{integral} can be approximated by
	\begin{equation}\label{newLaplace}
	\psi(\Pi|k) - \sum_{i=1}^K\frac{\dim F_i+1}2 \ln n,
	\end{equation}
	where we denote

\begin{align}
\psi(\Pi|k) &=max_{( F, \sigma)} \xi (\Pi, F, \sigma|k),
\label{function1}
\\
\label{function}
\xi(\Pi, F, \sigma|k)&=\sum_{i=1}^K \sum_{\ell \in \Pi^i} \left(-\frac {\|{x}_{ \bullet \ell}-P_i({x}_{ \bullet \ell})\|^2} 
{{\sigma_i^2}} - {\ln n}\, k_i\right)\;\;.
\end{align}

%then \mb{the majority of the mass of the integral (\ref{integral}) is closely concentrated around the maximum of $\psi(\Pi,E, \sigma)$.} 

Now, the term $\ln n \sum_{i=1}^K\frac{\dim F_i+1}2 $
in \eqref{newLaplace} is the same for each $\Pi$, so 
increasing  \eqref{integral} is equivalent to increasing
$\psi(\Pi|k)$.

 Now, due to the well known Eckhart-Young theorem, for each $i\in\{1,\ldots,K\}$, the first $k_i$ principal components of $X^i$ form the basis for the linear space ''closest'' to $X^i$, i.e.  the PCA  part of VARCLUST allows to obtain $F^m$ and $\sigma^m$, such that
$$(F^m, \sigma^m|\Pi^m,k^m)=argmax_{F,\sigma} \xi(\Pi^m, F,\sigma|k^m)\;.$$
Thus $\psi(\Pi^m|k^m)= \xi(\Pi^m, F^m,\sigma^m|k^m)$.

 Finally, in the Partition (BIC) step  of the algorithm the partition
$\Pi^{m+1}$ is selected such that
 $$\Pi^{m+1}|E^m, \sigma^m,k^m=argmax_{\Pi} \xi(\Pi, E^m,\sigma^m|k^m)\;\;.$$

In the result it holds that
 $$\psi(\Pi^{m+1}|k^m)\geq \psi(\Pi^{m}|k^m)\;\;$$
and consequently,
\begin{equation*}\label{hope}
\varphi(\Pi^{m+1}, k^{m})\geq \varphi(\Pi^{m}, k^{m}) \;\;,
\end{equation*}

with a large probability if only $k_i<<min (n,p_i)$ for all $i\in \{1,\ldots,K\}$.
\end{enumerate}

The combination of results for both steps of the algorithm 
implies 
\begin{corollary}\label{cor:main}
In the VARCLUST algorithm,
the objective function $\varphi(\Pi^{m+1}, k^{m})$ increases with $m$ with a large probability if for all $i\in\{1,\ldots,K\}$, $k_i<<min (n,p_i)$ and one of the following two conditions holds: $n>>p_i$ or $p_i>>n$ . 
\end{corollary}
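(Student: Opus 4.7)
The strategy is to combine the two inequalities that were essentially established in the paragraphs preceding the corollary: one for the PESEL (dimension update) step, and one for the PCA + BIC (partition update) step. The idea is that each sub-step of iteration $m+1$ leads to an increase of a Laplace approximation to $\varphi$, and then to transfer these inequalities to $\varphi$ itself at the cost of a remainder that is small under the asymptotic regime $n \gg p_i$ or $p_i \gg n$ and $k_i \ll \min(n,p_i)$.

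For the PESEL step $(\Pi^m, k^{m-1}) \to (\Pi^m, k^m)$, the plan is to use the definition of $k^m_i$ as the maximizer of $PESEL(X^i|k_i)$, which immediately yields
\begin{equation*}
\sum_{i=1}^K PESEL(X^i|k^m_i) \;\geq\; \sum_{i=1}^K PESEL(X^i|k^{m-1}_i).
\end{equation*}
Combining this with the Laplace approximation $\ln \PE(X^i|k_i) = PESEL(X^i|k_i) + o\!\left(PESEL(X^i|k_i)\right)$ (which holds with high probability as $n \to \infty$ with $p_i$ fixed, or $p_i \to \infty$ with $n$ fixed, by the results of \cite{sobczyk_pesel}) gives the approximate inequality $\varphi(\Pi^m, k^m) \gtrsim \varphi(\Pi^m, k^{m-1})$.

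For the PCA + BIC step $(\Pi^m, k^m) \to (\Pi^{m+1}, k^m)$, I would rely on the two-level Laplace approximation derived in the text: first integrate out each $C^i_{\bullet \ell}$ to land on expressions of the form $\mathrm{BIC}_\ell | F^i, \sigma_i$, then integrate out $F^i$ and $\sigma_i$ to obtain \eqref{newLaplace}. Since the penalty term $\ln n \sum_i (\dim F_i +1)/2$ depends only on $k^m$, it cancels when comparing successive partitions, so it suffices to show $\psi(\Pi^{m+1}|k^m) \geq \psi(\Pi^m|k^m)$. By the Eckart--Young theorem, the PCA subroutine produces $(F^m, \sigma^m)$ that maximizes $\xi(\Pi^m, F, \sigma | k^m)$, and by construction the BIC reassignment produces $\Pi^{m+1}$ that maximizes $\xi(\cdot, F^m, \sigma^m | k^m)$; chaining these two optimality statements gives $\psi(\Pi^{m+1}|k^m) \geq \xi(\Pi^{m+1}, F^m, \sigma^m | k^m) \geq \xi(\Pi^m, F^m, \sigma^m | k^m) = \psi(\Pi^m|k^m)$.

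The main obstacle will be the control of the Laplace remainders simultaneously across both steps: the approximations are only valid with high probability under the stated regularity assumptions, and one must argue that the gaps obtained from the argmax inequalities dominate these remainders (equivalently, the remainders are of lower order than the quantities being compared). Concatenating the two approximate inequalities then yields $\varphi(\Pi^{m+1}, k^m) \gtrsim \varphi(\Pi^m, k^{m-1})$ with high probability, which is the content of the corollary. A secondary care point is that the uniform prior on the Grassmannian and the regularity of $\pi_C, \pi(\sigma)$ need to be invoked to justify integrating over $F^i$ and $\sigma_i$ via Laplace, but these are stated as assumptions in the preceding discussion so can be cited directly.
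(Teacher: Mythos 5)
Your proposal is correct and follows essentially the same route as the paper: the argmax inequality for the PESEL step combined with the Laplace approximation $\ln \PE(X^i|k_i)=PESEL(X^i|k_i)+R$, and for the partition step the two-level Laplace approximation reducing the comparison to $\psi(\Pi|k)$, with Eckart--Young giving the optimality of the PCA centroids and the BIC reassignment giving the optimality of $\Pi^{m+1}$, chained exactly as in the text. The paper is no more rigorous than you are about uniformly controlling the Laplace remainders --- it likewise settles for the ``with large probability'' qualification under the stated asymptotic regimes --- so nothing is missing relative to its own standard of proof.
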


\begin{remark}

The above reasoning illustrates that both steps of VARCLUST asymptotically lead to an increase of the same objective function. The formula (\ref{function}) suggests that this function is bounded, which implies that VARCLUST converges with a large probability. In Figure    \ref{fig:iter} we illustrate the convergence of VARCLUST based on the more general version of BIC (\ref{bic:rss}) and a rather systematic increase of the mBIC approximation to the model posterior probability 
\begin{align*}
mBIC&(K,\Pi,k) \\
&= \sum_{i=1}^K   PESEL(X^i_m|k^m_i) - p \ln K - K\ln d
\end{align*}

in consecutive iterations of the algorithm.

\end{remark}

\section{ Simulation study}\label{ss}
In this section, we present the results of simulation study, in which we compare VARCLUST with other methods of variable clustering. To assess the performance of the procedures we measure their effectiveness and execution time. We also use VARCLUST to estimate the number of clusters in the data set. In all simulations we use VARCLUST based on the more general version of BIC (\ref{bic:rss}).
\subsection{Clustering methods}

In our simulation study we compare the following methods:
\begin{enumerate}
\item Sparse Subspace Clustering (SSC, \cite{ssc, Emmanuel1})
\item Low Rank Subspace Clustering (LRSC, \cite{lrsc})
\item VARCLUST with multiple random initializations.
In the final step, the initialization with the highest mBIC is chosen.
\item VARCLUST with initialization by the result of SSC (VARCLUST$_{aSSC}$)
\item ClustOfVar (COV, \cite{COV}, \cite{COV3})
\end{enumerate} 

The first two methods are based on spectral clustering and detailed description can be found in the given references. Specifically, Sparse Subspace Clustering comes with strong theoretical guarantees, proved in \cite{Emmanuel1}. For the third considered procedure we use the one-dimensional random initialization. This means that we sample without replacement $K$ variables which are used as one dimensional centers of $K$ clusters. The fourth method takes advantage of the possibility to provide the initial segmentation before the start of the VARCLUST procedure. It accelerates the method, because then there is no need to run it many times with different initializations. We build the centers by using the second step of VARCLUST (PESEL and PCA) for a given segmentation. In this case we use the assignment of the variables returned by SSC. Finally, we compare mentioned procedures with COV, which VARCLUST is an extended version of. COV also exploits k-means method. Initial clusters' centers are chosen uniformly at random from the data. Unlike in VARCLUST the center of a cluster is always one variable. The similarity measure is squared Pearson correlation coefficient. After assignment of variables, for every cluster PCA is performed to find the first principal component and make it a new cluster center. VARCLUST aims at overcoming the weaknesses of COV. Rarely in applications the subspace is generated by only one factor and by estimating the dimensionality of each cluster VARCLUST can better reflect the true underlying structure. 

\subsection{Synthetic data generation}
To generate synthetic data to compare the methods from the previous section we use two generation procedures detailed in algorithms \ref{alg:datagen1} and \ref{alg:datagen2}. Later we refer to them as modes. Factors spanning the subspaces in the first mode are shared between clusters, whereas in the second mode subspaces are independent.  As an input to both procedures we use: $n$ - number of individuals, $SNR$ - signal to noise ratio, $K$ - number of clusters, $p$ - number of variables, $d$ - maximal dimension of a subspace. SNR is the ratio of the power of signal to the power of noise, i.e., $SNR = \frac{\sigma^2}{\sigma^2_e}$ the ratio of variance of the signal to the variance of noise.

\begin{algorithm} [H]
\caption{Data generation with shared factors}
\label{alg:datagen1}          
\begin{algorithmic}                    
    \REQUIRE $n$, $SNR$, $K$, $p$, $d$
    \STATE Number of factors $m \leftarrow K\frac{d}{2}$ 
	\STATE Factors $F = (f_1, \ldots, f_m)$ are generated independently from the multivariate standard normal distribution  and then $F$ is scaled to have columns with mean $0$ and standard deviation $1$ 
	\STATE Draw subspaces dimension $d_1, \ldots d_K$ uniformly from $\{1,\ldots, d \}$  
	\FOR{$i = 1, \ldots, K$}
	\STATE Draw $i$-th subspace basis as sample of size $d_i$ uniformly from columns of $F$ as $F^i$
	\STATE Draw matrix of coefficients $C_i$ from $\mathcal{U}(0.1,1) \cdot sgn(\mathcal{U}(-1,1))$
	\STATE Variables in the $i$-th subspace are $X^i \leftarrow  F^iC_i$
	\ENDFOR
	\STATE Scale matrix $X = (X_1, \ldots, X_K)$ to have columns with unit variance
	\STATE return $X + Z$ where $Z \sim \mathcal{N}(0,\frac{1}{SNR}I_n)$
\end{algorithmic}
\end{algorithm}

\begin{algorithm} [H]
\caption{Data generation with independent subspaces}
\label{alg:datagen2}          
\begin{algorithmic}                    
    \REQUIRE $n$, $SNR$, $K$, $p$, $d$
	\STATE Draw subspaces' dimension $d_1, \ldots d_K$ uniformly from $\{1,\ldots, d \}$  
	\FOR{$i = 1, \ldots, K$}
	\STATE Draw $i$-th subspace basis $F^i$ as sample of size $d_i$ from multivariate standard normal distribution
	\STATE Draw matrix of coefficients $C_i$ from $\mathcal{U}(0.1,1) \cdot sgn(\mathcal{U}(-1,1))$
	\STATE Variables in $i$-th subspace are $X^i \leftarrow  F^iC_i$
	\ENDFOR
	\STATE Scale matrix $X = (X_1, \ldots, X_K)$ to have columns with unit variance
	\STATE return $X + Z$ where $Z \sim \mathcal{N}(0,\frac{1}{SNR}I_n)$
\end{algorithmic}
\end{algorithm}

\subsection{Measures of effectiveness}
To compare clustering produced by our methods we use three measures of effectiveness.
\begin{enumerate}
\item Adjusted Rand Index - one of the most popular measures. Let $A, B$ be the partitions that we compare (one of them should be true partition). Let $a,b,c,d$ denote respectively the number of pairs of points from data set that are in the same cluster both in $A$ and $B$, that are in the same cluster in $A$ but in different clusters in $B$, that are in the same cluster in $B$ but in different clusters in $A$ and that are in the different clusters both in $A$ and $B$. Note that the total number of pairs is $\binom{p}{2}$. Then
\begin{align*}
ARI &= \\
& \frac{\binom{p}{2}(a+d) - [(a+b)(a+c) + (b+d)(c+d)]}{\binom{p}{2}^2- [(a+b)(a+c) + (b+d)(c+d)]}
\end{align*}
The maximum value of ARI is $1$ and when we assume that every clustering is equally probable its expected value is $0$. For details check \cite{ari}.

The next two measures are taken from \cite{soltys}. Let $X = (x_1, \ldots x_p)$ be the data set, $A$ be a partition into clusters $A_1, \ldots A_n$ (true partition) and $B$ be a partition into clusters $B_1, \ldots, B_m$.
\item Integration - for the cluster $A_j$ it is given by formula
\begin{align*}
& Int(A_j) =\\
& \frac{max_{\substack{k = 1, \ldots, m}} \# \{  i \in \{ 1, \ldots p \}: X^i \in A_j \wedge X^i \in B_k \}  }{\# A_j}
\end{align*}

Cluster $B_k$ for which the maximum is reached is called integrating cluster of $A_j$. Integration can be interpreted as the percentage of data points from given cluster of true partition that are in the same cluster in partition $B$. For the whole clustering 
$$
Int(A,B) = \frac{1}{n} \sum_{j=1}^n Int(A_j)
$$

\item Acontamination - for cluster $A_j$ it is given by formula
$$
Acont(A_j) = \frac{ \# \{  i \in \{ 1, \ldots p \}: X^i \in A_j \wedge X^i \in B_k \}  }{\# B_k} 
$$
where $B_k$ is integrating cluster for $A_j$. Idea of acontamination is complementary to integration. It can be interpreted as the percentage of the data in the integrating cluster $B_k$ are from $A_j$. For the whole clustering 
$$
Acont(A,B) = \frac{1}{n} \sum_{j=1}^n Acont(A_j)
$$
\end{enumerate}

Note that the bigger ARI, integration and acontamination are, the better is the clustering. For all three indices the maximal value is $1$.

%\newpage
\subsection{Simulation study results}
In this section we present the outcome of the simulation study. We generate the synthetic data $100$ times. We plot multiple boxplots to compare clusterings of different methods. By default the number of runs (random initializations) is set to $n_{init} = 30$ and the maximal number of iterations within the k-means loop is set to $n_{iter}=30$. Other parameters used in given simulation are written above the plots. They include parameters from data generation algorithms (\ref{alg:datagen1}, \ref{alg:datagen2}) as well as $mode$ indicating which of them was used.

\subsubsection{Generation method}
\begin{figure} 
\centering
\caption{Comparison with respect to the data generation method. Simulation parameters: $n=100, \ p=800, \ K=5, \ d=3, \ SNR=1$.} \label{fig:mode}
\minipage{0.5\textwidth}
  \subcaption{factors not shared}
  \includegraphics[width=\linewidth]{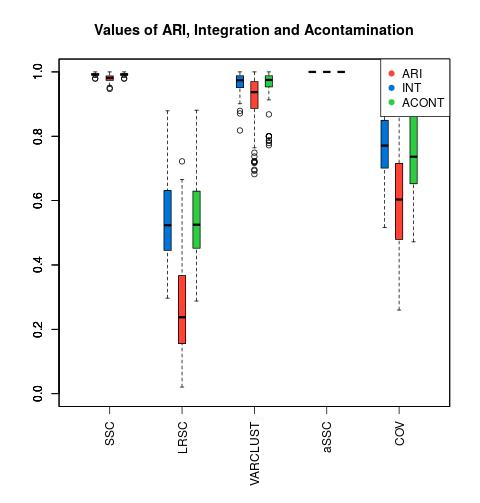}
\endminipage\hfill
\minipage{0.5\textwidth}
  \subcaption{shared factors}
  \includegraphics[width=\linewidth]{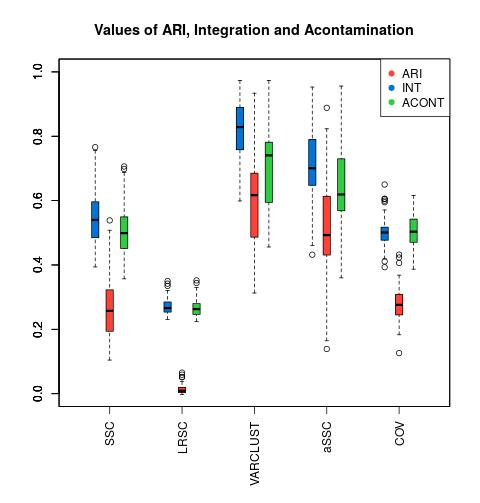}
\endminipage
\end{figure}

In this section we compare the methods with respect to the parameter $mode$, which takes the value $shared$ (data generated using \ref{alg:datagen1}), if the subspaces may share the factors, and the value $not\_shared$ (data generated using \ref{alg:datagen2}) otherwise (Figure \ref{fig:mode}). When the factors are not shared, SSC and VARCLUST provide almost perfect clustering. We can see that in case of shared factors the task is more complex. All the methods give worse results in that case. However, VARCLUST and VARCLUST$_{aSSC}$ outperform all the other procedures and supply acceptable clustering in contrast to SSC, LRSC and COV. The reason for that is the mathematical formulation of SSC and LRSC - they assume that the subspaces are independent and do not have common factors in their bases.
\subsubsection{Number of variables}
\begin{figure*} 
\centering
\caption{Comparison with respect to the number of variables. Simulation parameters: $n=100, \ K=5, \ d=3, \ SNR=1, \ mode : shared$.} 
\label{fig:vars}
\minipage{0.5\textwidth}
	\subcaption{$p=300$}
  \includegraphics[width=\linewidth]{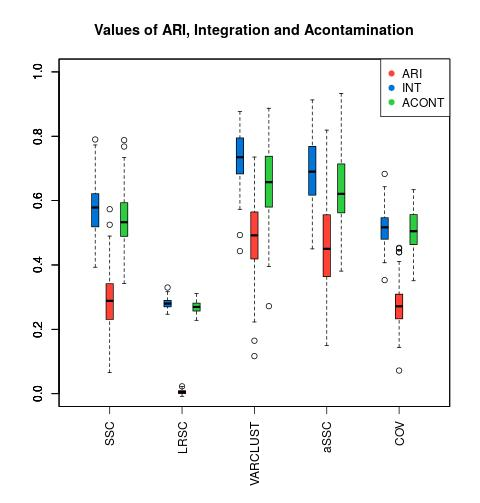}
\endminipage\hfill
\minipage{0.5\textwidth}
\subcaption{$p=600$}
  \includegraphics[width=\linewidth]{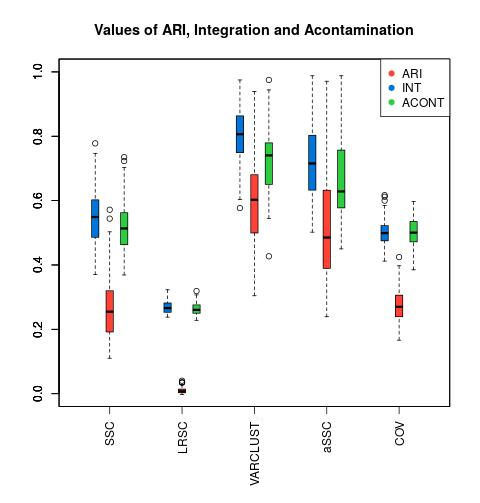}
\endminipage \\
\minipage{0.5\textwidth}
\subcaption{$p=800$}
  \includegraphics[width=\linewidth]{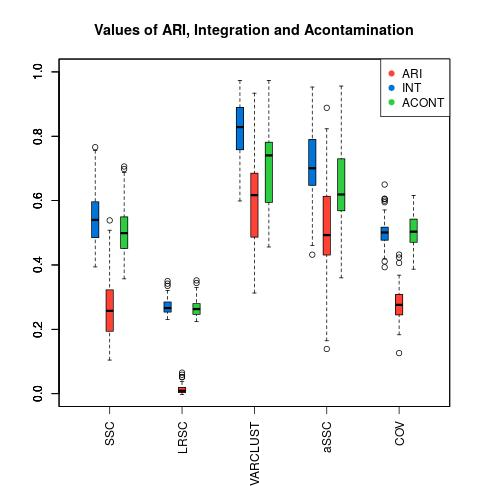}
\endminipage\hfill
\minipage{0.5\textwidth}
\subcaption{$p=1500$}
  \includegraphics[width=\linewidth]{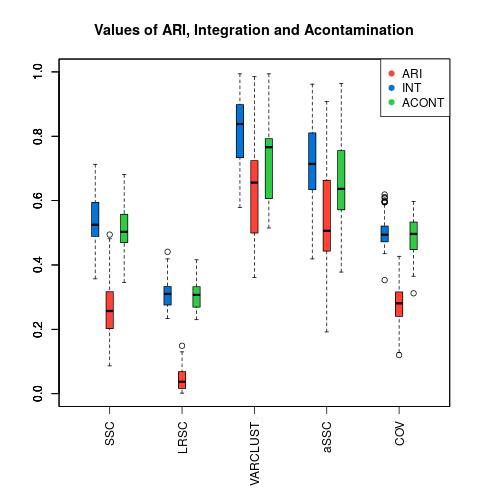}
\endminipage
\end{figure*}

In this section we compare the methods with respect to the number of variables (Figure \ref{fig:vars}). When the number of features increases, VARCLUST tends to produce better clustering. For our method this is an expected effect because when the number of clusters and subspace dimension stay the same we provide more information about the cluster's structure with every additional predictor. Moreover, PESEL from (\ref{bic}) gives a better approximation of the cluster's dimensionality and the task of finding the real model becomes easier. However, for COV, LRSC, SSC this does not hold as the results are nearly identical.
 
\subsubsection{Maximal dimension of subspace}
\begin{figure*} 
\centering
\caption{Comparison with respect to the number of variables. Simulation parameters: $n=100, \ p=600, \ K=5, \ SNR=1, \ mode : shared$. In the left column the maximal dimension passed to VARCLUST was equal to $d$, in the right we passed $2d$.} 
\label{fig:dim}
\minipage{0.45\textwidth}
\subcaption{$d=3$}
  \includegraphics[width=\linewidth]{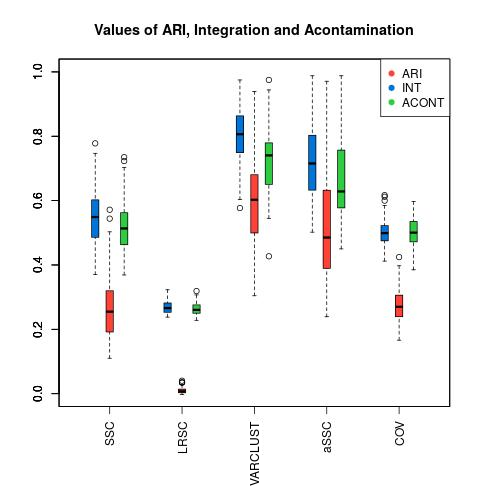}
\endminipage\hfill
\minipage{0.45\textwidth}
\subcaption{$d=3$}
  \includegraphics[width=\linewidth]{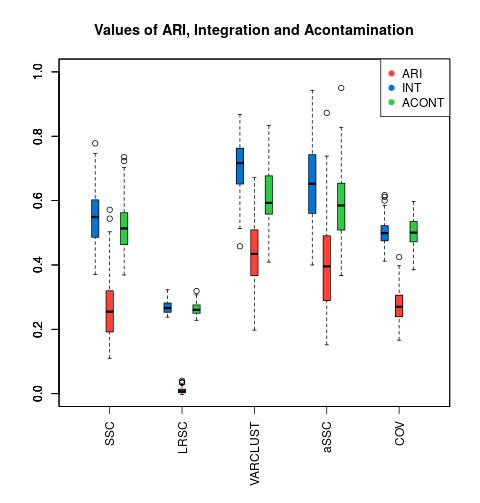}
\endminipage \\
\minipage{0.45\textwidth}
\subcaption{$d=5$}
  \includegraphics[width=\linewidth]{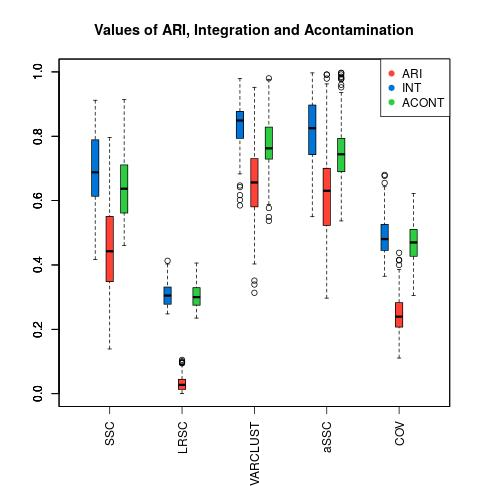}
\endminipage\hfill
\minipage{0.45\textwidth}
\subcaption{$d=5$}
  \includegraphics[width=\linewidth]{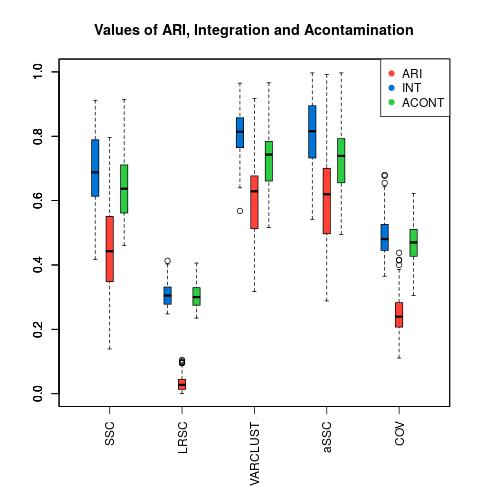}
\endminipage \\
\minipage{0.45\textwidth}
\subcaption{$d=7$}
  \includegraphics[width=\linewidth]{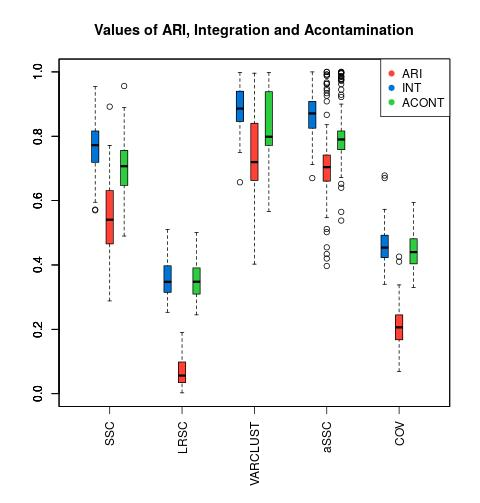}
\endminipage\hfill
\minipage{0.45\textwidth}
\subcaption{$d=7$}
  \includegraphics[width=\linewidth]{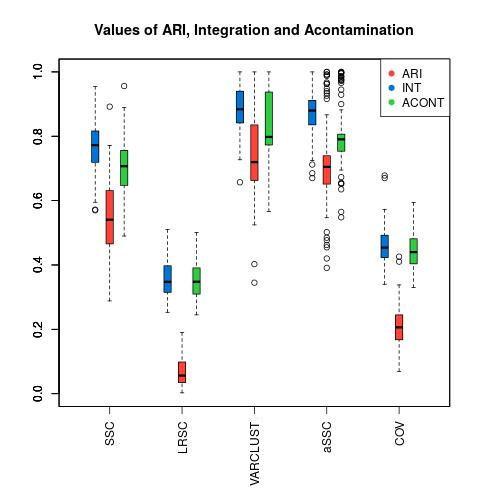}
\endminipage
\end{figure*}

We also check what happens when the number of parameters in the model of VARCLUST increases. In Figure \ref{fig:dim}, in the first column, we compare the methods with respect to the maximal dimension of a subspace ($d = 3,5,7$). However, in real-world clustering problems it is common that it is not known. Therefore, in the second column, we check the performance of VARCLUST and VARCLUST$_{aSSC}$ when the given maximal dimension as a parameter is twice as large as maximal dimension used to generate the data. 

Looking at the first column, we can see that the effectiveness of VARCLUST grows slightly when the maximal dimension increases. However, this effect is not as noticeable as for SSC. It may seem unexpected for VARCLUST but variables from subspaces of higher dimensions are easier to distinguish because their bases have more independent factors. In the second column, the effectiveness of the methods is very similar to the first column except for $d=3$, where the difference is not negligible. Nonetheless, these results indicate that thanks to PESEL, VARCLUST performs well in terms of estimating the dimensions of the subspaces.

\subsubsection{Number of clusters}
\begin{figure*} 
\centering
\caption{Comparison with respect to the number of clusters. Simulation parameters: $n=100, \ p=600, \ d=3, \ SNR=1, \ mode : not \ shared$.} 
\label{fig:clusters}
\minipage{0.5\textwidth}
\subcaption{$K=5$}
  \includegraphics[width=\linewidth]{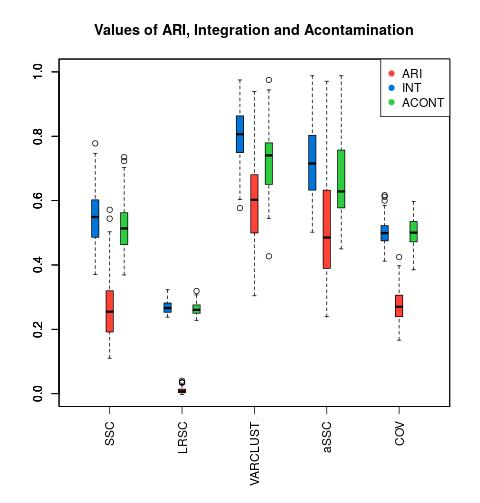}
\endminipage\hfill
\minipage{0.5\textwidth}
\subcaption{$K=10$}
  \includegraphics[width=\linewidth]{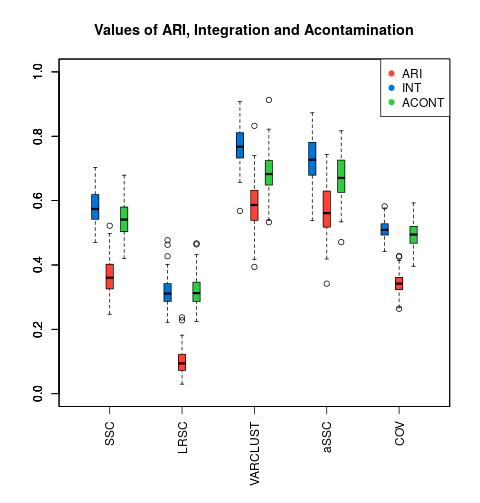}
\endminipage \\
\minipage{0.5\textwidth}
\subcaption{$K=15$}
  \includegraphics[width=\linewidth]{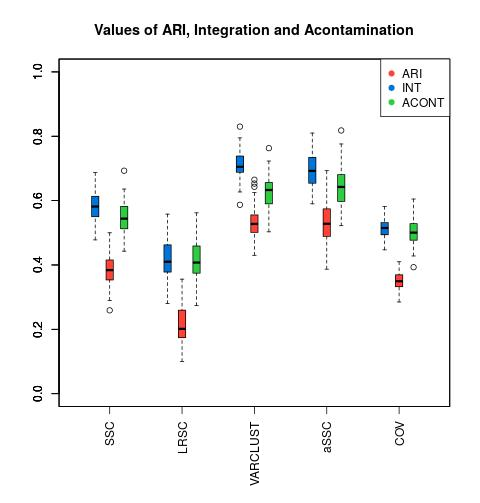}
\endminipage\hfill
\minipage{0.5\textwidth}
\subcaption{$K=20$}
  \includegraphics[width=\linewidth]{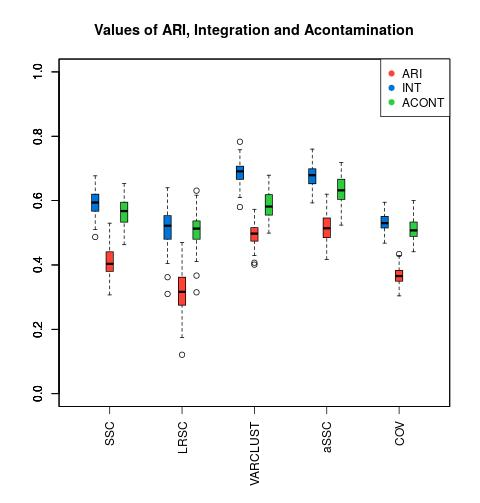}
\endminipage
\end{figure*}

The number of the parameters in the model for VARCLUST grows significantly with the number of clusters in the data set. In Figure \ref{fig:clusters} we can see that for VARCLUST the effectiveness of the clustering diminishes when the number of clusters increases. The reason is the larger number of parameters in our model to estimate. The opposite effect holds for LRSC, SSC and COV, although it is not very apparent. 

\subsubsection{Signal to noise ratio}
\begin{figure*} 
\centering
\caption{Comparison with respect to the signal to noise ratio. Simulation parameters: $n=100, \ p=600, \ K=5, \ d=3, \ mode : not \ shared$.} 
\label{fig:snr}
\minipage{0.5\textwidth}
\subcaption{$SNR=0.5$}
  \includegraphics[width=\linewidth]{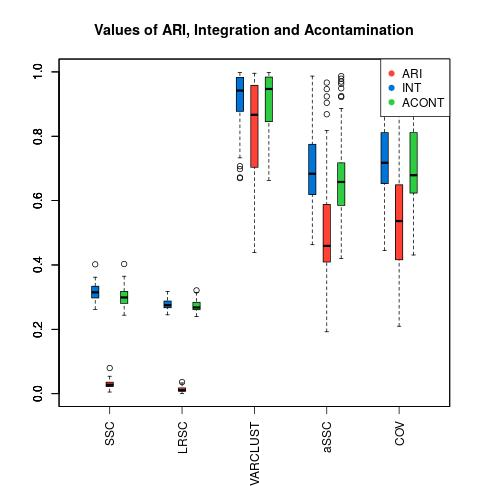}
\endminipage\hfill
\minipage{0.5\textwidth}
\subcaption{$SNR=0.75$}
  \includegraphics[width=\linewidth]{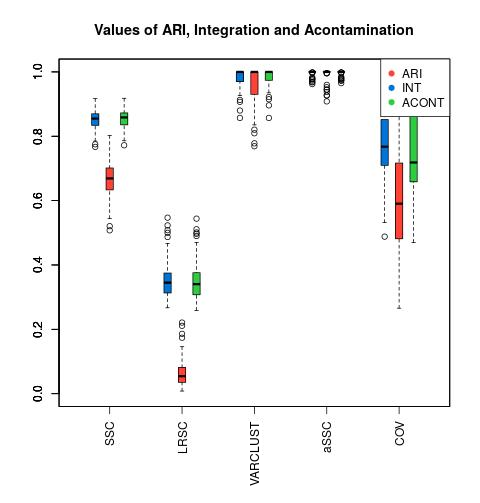}
\endminipage \\
\minipage{0.5\textwidth}
\subcaption{$SNR=1$}
  \includegraphics[width=\linewidth]{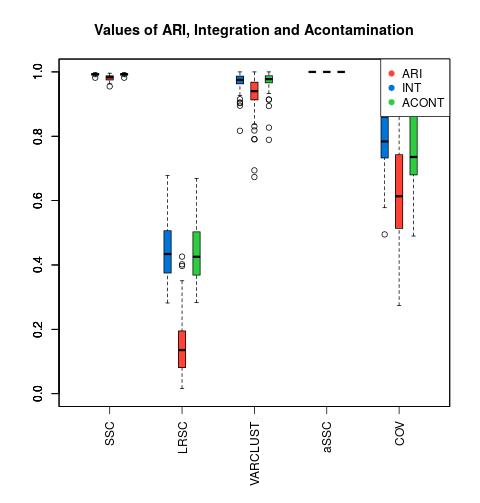}
\endminipage\hfill
\minipage{0.5\textwidth}
\subcaption{$SNR=2$}
  \includegraphics[width=\linewidth]{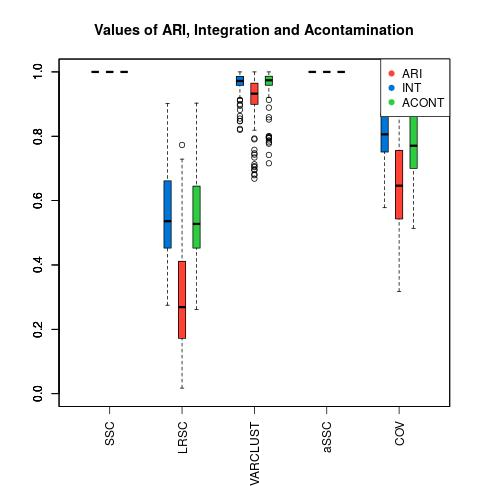}
\endminipage
\end{figure*}

One of the most important characteristics of the data set is signal to noise ratio (SNR). Of course, the problem of clustering is much more difficult when $SNR$ is small because the corruption caused by noise dominates the data. However, it is not uncommon in practice to find data for which $SNR < 1$.

In Figure \ref{fig:snr} we compare our methods with respect to SNR. For $SNR = 0.5$, VARCLUST supplies a decent clustering. In contrary, SSC and LRSC perform poorly. All methods give better results when $SNR$ increases, however for SSC this effect is the most noticeable. For $SNR \geq 1$, SSC produces perfect or almost perfect clustering while VARCLUST performs slightly worse.

\subsubsection{Estimation of the number of clusters}
\begin{figure*} 
\centering
\caption{Estimation of the number of clusters. Simulation parameters: $n=100, \ p=600, \ d=3, \ SNR=1 \ mode : not \ shared$.} 
\label{fig:estim_clust}
\minipage{0.5\textwidth}
\subcaption{$K=5$}
  \includegraphics[width=\linewidth]{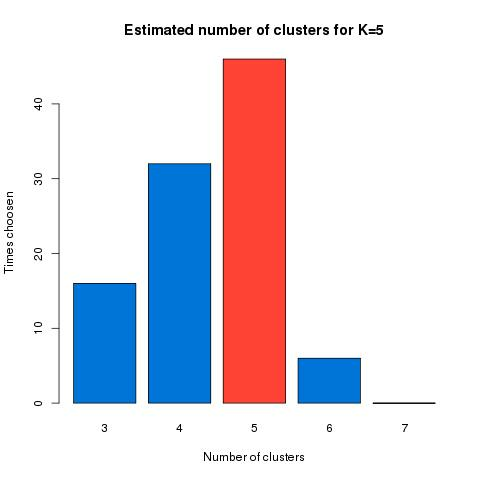}
\endminipage\hfill
\minipage{0.5\textwidth}
\subcaption{$K=10$}
  \includegraphics[width=\linewidth]{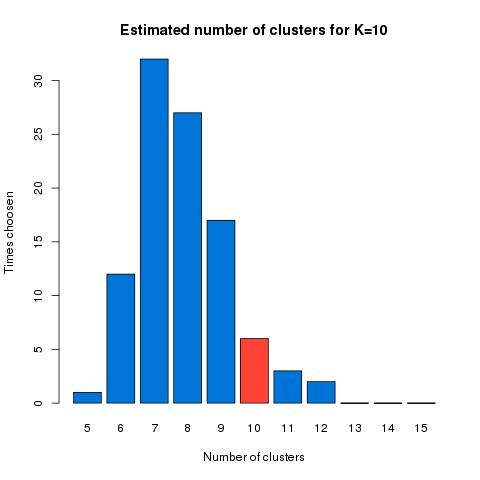}
\endminipage \\
\minipage{0.5\textwidth}
\subcaption{$K=15$}
  \includegraphics[width=\linewidth]{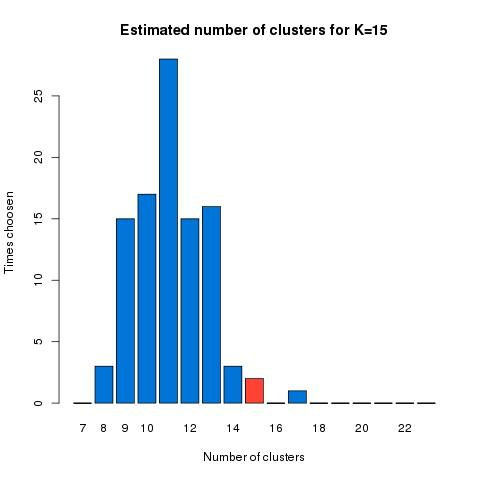}
\endminipage \hfill
\minipage{0.5\textwidth}
\subcaption{$K=20$}
  \includegraphics[width=\linewidth]{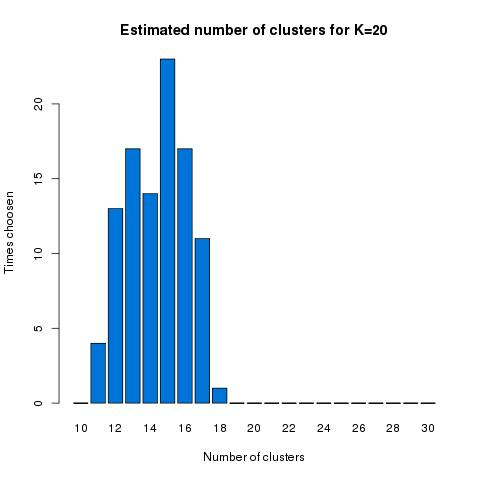}
\endminipage 
\end{figure*}

%Despite a large number of existing methods used in variable clustering problems, there are not many tools for automatic detection of the number of clusters. 
Thanks to mBIC, VARCLUST can be used  for automatic setection of the number of clusters. We generate the data set with given parameters $100$ times and check how often each number of clusters from range $\left[K-\frac{K}{2}, K + \frac{K}{2}\right]$ is chosen (Figure \ref{fig:estim_clust}). We see that for $K=5$ the correct number of clusters was chosen most times. However, when the number of clusters increases, the clustering task becomes more difficult, the number of parameters in the model grows and VARCLUST tends to underestimate the number of clusters.

\subsubsection{Number of iterations}
\begin{figure*} \centering
\caption{mBIC with respect to the number of iterations for 4 different initializations. Simulation parameters: $n=100, \ K=5, \ d=3, \ SNR=1 \ mode : shared$.} 
\label{fig:iter}
\minipage{0.5\textwidth}
\subcaption{$p=750$}
  \includegraphics[width=\linewidth]{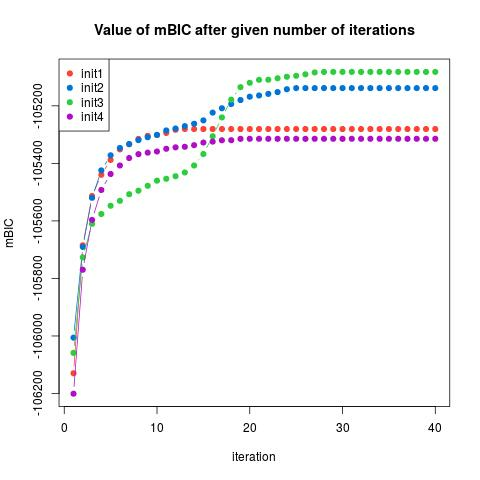}
\endminipage\hfill
\minipage{0.5\textwidth}
\subcaption{$p=1500$}
  \includegraphics[width=\linewidth]{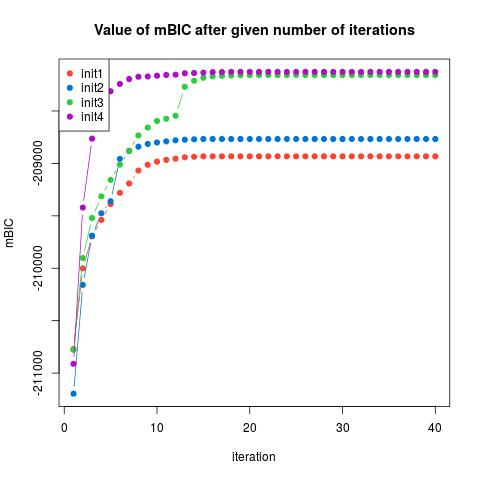}
\endminipage \\
\minipage{0.5\textwidth}
\subcaption{$p=3000$}
  \includegraphics[width=\linewidth]{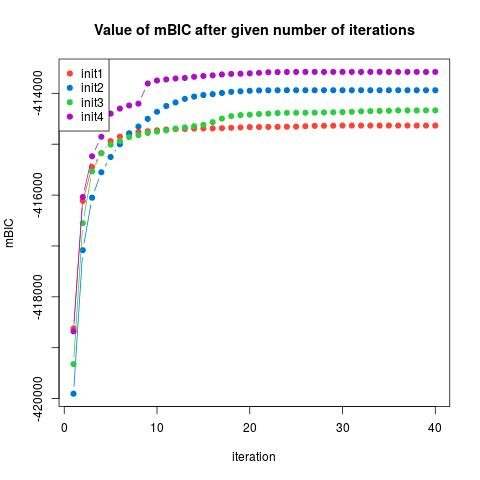}
\endminipage \hfill
\minipage{0.5\textwidth}
\endminipage 
\end{figure*}

In this section we investigate convergence of mBIC within $k$-means loop for four different initializations (Figure \ref{fig:iter}). We can see that it is quite fast: in most cases it needed no more than 20 iterations of the $k$-means loop. We can also notice that the size of the data set (in this case the number of variables) has only small impact on the number of iterations needed till convergence. However, the results in Figure \ref{fig:iter} show that multiple random initializations in our algorithm are required to get satisfying results - the value of mBIC criterion varies a lot between different initializations. 

\subsubsection{Execution time}
\begin{figure} \centering
\caption{Comparison of the execution time of the methods with respect to $p$ and $K$. Simulation parameters:$n=100, \ d=3, \ SNR=1 \ mode : shared$.} 
\label{fig:time}
\minipage{0.5\textwidth}
\subcaption{With respect to the number of variables}
  \includegraphics[width=\linewidth]{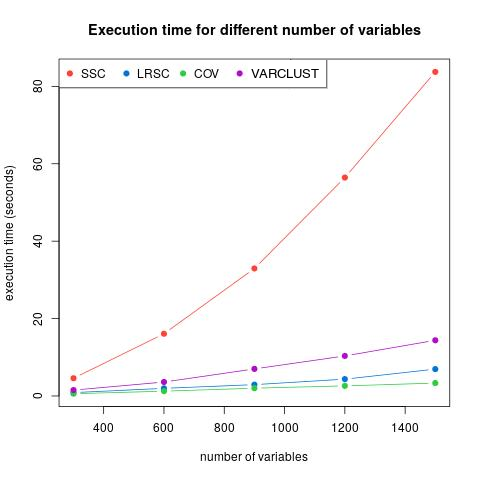}
\endminipage\hfill
\minipage{0.5\textwidth}
\subcaption{With respect to the number of clusters}
  \includegraphics[width=\linewidth]{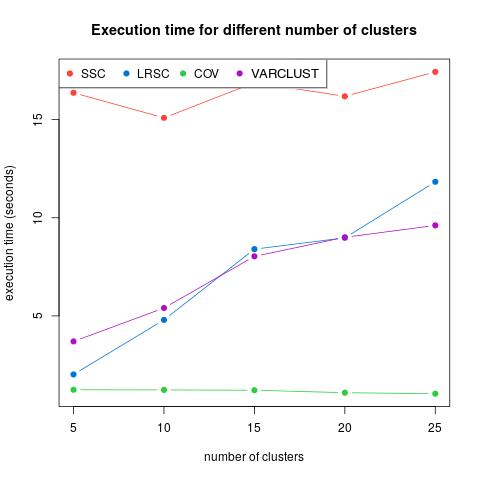}
\endminipage
\end{figure}

In this section we compare the execution times of compared methods. They were obtained on the machine with \textit{Intel(R) Core(TM) i7-4790 CPU 3.60GHz, 8 GB RAM}. The results are in Figure \ref{fig:time}. For the left plot $K=5$ and for the right one $p=600$. On the plots for both VARCLUST and COV we used only one random initialization. Therefore, we note that for $n_{init}=30$ the execution time of VARCLUST will be larger. However, not by exact factor of $n_{init}$ thanks to parallel implementation in \cite{sobczyk_varclust_package}. Nonetheless, VARCLUST is the most computationally complex of these methods. We can see that COV and SSC do not take longer for bigger number of clusters when the opposite holds for VARCLUST and LRSC. What is more, when the number of variables increases, the execution time of SSC grows much more rapidly than time of one run of VARCLUST. Therefore, for bigger data sets it is possible to test more random initializations of VARCLUST in the same time as computation of SSC. Furthermore, running VARCLUST with segmentation returned by SSC (enhancing the clustering) is not much more time consuming than SSC itself.

\subsubsection{Discussion of the results}
The simulation results prove that VARCLUST is an appropriate method for variable clustering. As one of the very few approaches, it is adapted to the data dominated by noise. One of its biggest advantages is a possibility to recognize subspaces which share factors. It is also quite robust to increase in the maximal dimension of a subspace. Furthermore, it can be used to detect the number of clusters in the data set. Last but not least, in every setting of the parameters used in our simulation, VARCLUST outperformed LRSC and COV and did better or as well as SSC. The main disadvantage of VARCLUST is its computational complexity. Therefore, to reduce the execution time one can provide custom initialization as in VARCLUST$_{aSSC}$. This method in all cases provided better results than SSC, so our algorithm can also be used to enhance the clustering results of the other methods. The other disadvantage of VARCLUST is a problem with the choice of the parameters $n_{init}$ or $n_{iter}$. Unfortunately, when data size increases, in order to get acceptable clustering we have to increase at least one of these two values. However, it is worth mentioning that in case of parameters used in out tests $n_{init} = 30$ and the maximal number of iterations equal to $30$ on a machine with $8$ cores the execution time of VARCLUST is comparable with execution time of SSC.

\section{Applications to  real data analysis} \label{sec:realdata}
In this section we apply  VARCLUST to two different data sets and show that our algorithm can produce meaningful, interpretable clustering and dimensionality reduction.
\subsection{Meteorological data}
%\mb{If the dimensions are large enough - we expect to see convergence, illustrate it empirically.}\\%2.  (studied by M. Staniak. Where is the file? get the results from M. Staniak)
%\section{Case study}
%\end{document}
First, we will analyze air pollution data from Krak{\'o}w, Poland \cite{airly}.
This example will also serve as a short introduction to the \texttt{varclust} R package.
\subsubsection{About the data}
Krakow is one of the most polluted cities in Poland and even in the world.
This issue has gained enough recognition to inspire several grass-root initiatives that aim to monitor air quality and inform citizens about health risks.
\textit{Airly} project created a huge network of air quality sensors which were deployed across the city.
Information gathered by the network is accessible via the \url{map.airly.eu} website.
Each of 56 sensors measures temperature, pressure, humidity and levels of particulate matters PM1, PM2.5 and PM10 (number corresponds to the mean diameter).
This way, air quality is described by 336 variables.
Measurements are done on an hourly basis.

Here, we used data from one month.
We chose {M}arch, because in this month the number of missing values is the smallest.
First, we remove{d} non-numerical variables from the data set.
We remove columns with a high percentage (over 50\%) of missing values and impute the other by the mean. 
We {used} two versions of the data set: \texttt{march\_less} data frame {containing} hourly measurements (in this case number of observations is greater than number of variables) and \texttt{march\_daily} containing averaged daily measurements (which satisfies the $p \gg n$ assumption). 
Results for both versions are consistent.
The dimensions of the data are $577{\times}263$ and $25{\times}263$, respectively.
Both data sets along with R code and results are available on {\url{https://github.com/mstaniak/varclust_example}}

\subsubsection{Clustering based on random initialization}

When the number of clusters is not known, we can use the {\texttt{mlcc.bic}} function which finds a clustering of variables with an estimated  number of clusters and also returns factors that span each cluster.
A minimal call to {\texttt{mlcc.bic}} function requires just the name of a data frame in which the data are stored.

{
%{\tiny
\begin{lstlisting}
varclust_minimal <-
mlcc.bic(march_less, greedy = F)
\end{lstlisting}
}
%}

The returned object is a list containing the resulting segmentation of variables (\texttt{segmentation} element), a list with matrices of factors for each cluster, mBIC for the chosen model, list describing dimensionality of each cluster and models fitted in other iterations of the algorithm (non-optimal models). 
By default, at most 30 iterations of the greedy algorithm are used to pick a model.
Also by default it is assumed that the number of clusters is between 1 and 10, and the maximum dimension of a single cluster is 4. 
These parameters can be tweaked.
Based on comparison of mBIC values for clustering results with different maximum dimensions, we selected 6 as the maximum dimension.

{
%{\tiny{
\begin{lstlisting}
varclust_clusters = 
mlcc.bic(march_less,greedy = TRUE,
flat.prior = TRUE, max.dim = 6)
\end{lstlisting}
}
%}

To minimize the impact of random initialization, we can run the algorithm many times and select best clustering based on the value of mBIC criterion.
We present results for one of clusterings obtained this way.

% In our example, five clusters were chosen, so the the value is far from both extreme cases 1 and 10.
% On the other hand, estimated number of dimension is 4 in all cases, so we may check if a larger maximum dimension can provide a better fit.
% \begin{lstlisting}
% daily_varclust2 <- mlcc.bic(march_daily,
%                             greedy = FALSE,
%                             max.dim = 8)
% \end{lstlisting}
% Value of BIC of the final model is smaller than the BIC value for the model with maximum dimension set to 4, so the increase wasn't necessary.
% The estimated number of clusters is 6.
% Variables describing different types of measurement temperature, pressure and humidity were almost perfectly separated into three clusters. 
% Turns out that each group of variables is spanned by 4 factors.
We can see that variables describing temperature, humidity and pressure were grouped in four clusters (with pressure divided into two clusters and homogenous clusters for humidity and temperature related variables), while variables that describe levels of particulate matters are spread among different clusters that do not describe simply one size of particulate matter (1, 2.5 or 10), which may imply that measurements are in a sense non-homogenous. 
In Figure \ref{ex:map} we show how these clusters are related to geographical locations.

\subsubsection{Clustering based on SSC algorithm}

The {\texttt{mlcc.bic}} function performs clustering based on a random initial segmentation.
When the number of clusters is known or can be safely assumed, we can use the \texttt{mlcc.reps} function, which can start from given initial segmentations or a random segmentation.
We will show how to initialize the clustering algorithm with a fixed grouping. 
For illustration, we will use results of Sparse Subspace Clustering (SSC) algorithm.
SSC is implemented in a Matlab package maintained by Ehsan Elhamifar \cite{ssc}.
As of now, no R implementation of SSC is available.
We store resulting segmentations for numbers of clusters from 1 to 20 in vectors called 
\texttt{clx}, where x is the number of clusters.
Now the calls to \texttt{mlcc.reps} function should look like the following example. 

{{
{\small{
\begin{lstlisting}
vclust10 <- mlcc.reps(march_less,
	numb.clusters=10,max.iter=50,
 	initial.segmentations=list(cl10))
\end{lstlisting}
}}

The result is a list with a number of clusters (\texttt{segmentation}), calculated mBIC and a list of factors spanning each of the clusters.
For both initialization methods, variability of results regarding the number of clusters diminished by increasing the \texttt{numb.runs} argument to \texttt{mlcc.bic} and \texttt{mlcc.reps} functions which control the number of runs of the k-means algorithm.
% Checking some larger values of number of clusters reveals that the criterion is stable above around 8 clusters, so we choose it as the optimal number of clusters.
% 
% \begin{figure}[htbp] \centering
% \label{ex:mbic}
% \centering
% \includegraphics[width=7cm]{nclust_vs_BIC.png}
% \caption{Value of mBIC criterion for fixed number of clusters and initial segmentation generated by SSC algorithm.}
% \end{figure}
% 
% Variables describing temperature, humidity and pressure were grouped together in 4 clusters and variables related to particular matter levels were spread across remaining 4 clusters.
% For each station, all 3 variables describing PM1, PM10 and PM2.5 fall into the same cluster.
% In Figure \ref{ex:map} we show how these clusters are related to geographical location of a station.

\begin{figure}[htbp] \centering
\centering
\includegraphics[width=10cm]{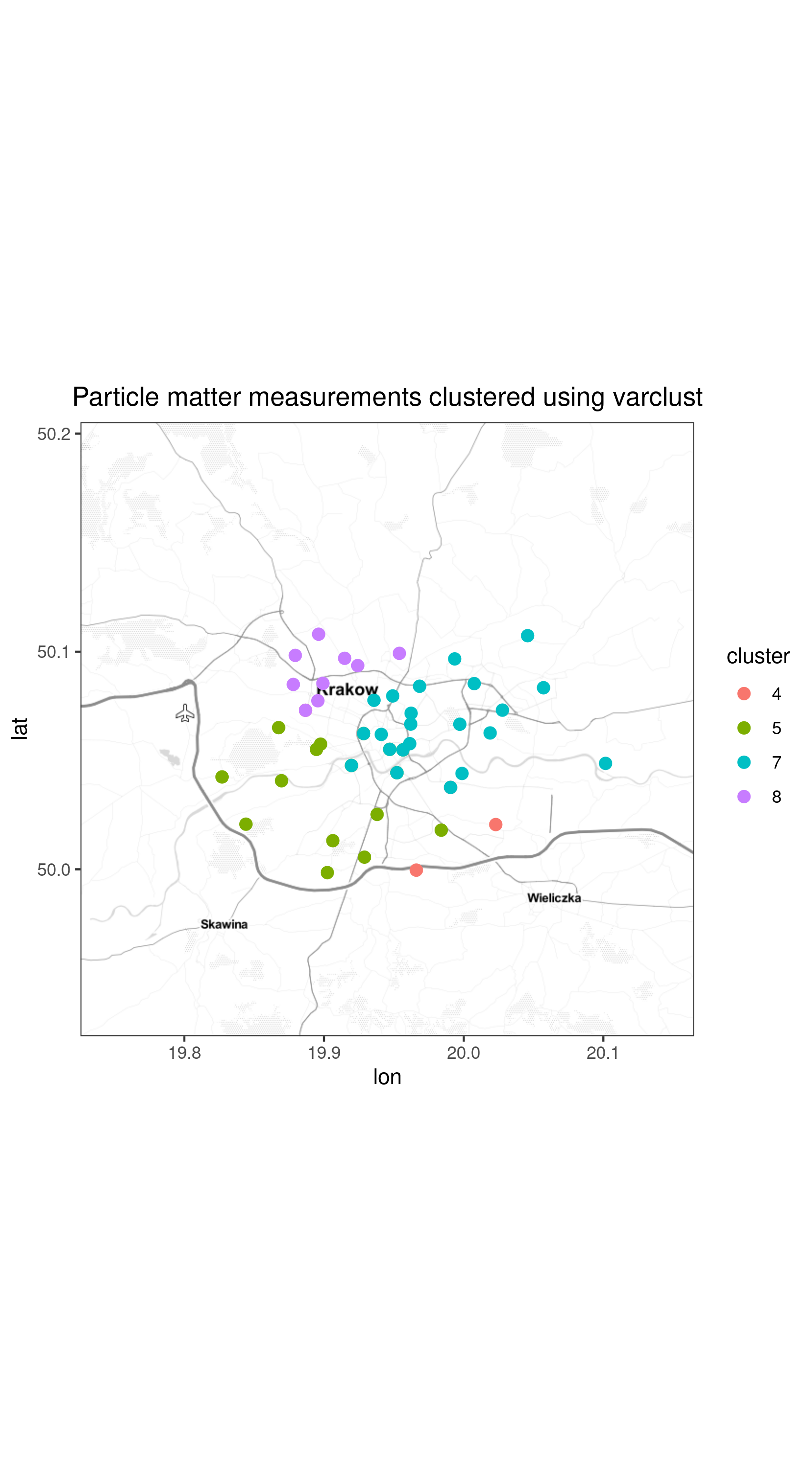}
\caption{Clusters of variables describing particulate matter levels on a map of Krakow. {Without any prior knowledge on spatial structure, VARCLUST groups variables corresponding to sensors located near each other.}}
\label{ex:map}
\end{figure}

\subsubsection{Conclusions}

We applied VARCLUST algorithm to data describing air quality in Krak{\'o}w.
We were able to reduce the dimensionality of the data significantly. 
It turns out that for each  characteristics: temperature, humidity and
the pressure, measurements made in 56 locations can be well represented by
a low dimensional projection found by Varclust.
Additionally, variables describing different particulate matter levels can be clustered into geographically meaningful groups, clearly separating the center and a few bordering regions.
If we were to use these measurements as explanatory variables in  a model describing for example effects of air pollution on health, factors that span clusters could be used instead as predictors, allowing for a significant dimension reduction.

The results of the clustering are random by default.
Increasing the number of runs of $k$-means algorithm and maximum number of iterations of the algorithm stabilize the results.
Increasing these parameters also increases the computation time.
Another way to remove randomness is to select an initial clustering using another method.
In the examples, clustering based on SSC algorithm was used.

The \texttt{mlcc.bic} function performs greedy search by default, meaning that the search stops after first decrease in mBIC score occurs.
On the one hand, this might lead to suboptimal choice of number of clusters, so setting \texttt{greedy} argument to \texttt{FALSE} might be helpful, but on the other hand, the criterion may become unstable for some larger numbers of clusters.

\subsection{TCGA Breast Cancer Data}

In the next subsection, the VARCLUST clustering method is applied on large open-source data generated by The Cancer Genome Atlas (TCGA) Research Network, available on \url{http://cancergenome.nih.gov/}.
TCGA has profiled and analyzed large numbers of human tumours to discover molecular aberrations at the DNA, RNA, protein, and epigenetic levels. 
In this analysis, we focus on the Breast Cancer cohort, made up of all patients reviewed by the TCGA Research Network, including all stages and all anatomopathological characteristics of the primary breast cancer disease, as in \cite{TCGAnature}. 

The genetic informations in tumoral tissues DNA that are involved in gene expression are measured from messenger
%The genetic informationÂs in tumoral tissues DNA that are involved in gene expression are measured from messenger 
RNA (mRNA) sequencing. The analysed data set is composed of $p=60488$ mRNA transcripts for $n=1208$ patients.

%\fp{Valerie, can you give a detailed description including origin of the patients for instance, date of the analysis (at the beginning of the cancer), on which type of patients (Triple Neg,\ldots) I think that such descriptions are important in such a paper.}
 % c'est fait

For this data set, our objective is twofold. First, from a machine learning point of view, we hope that this clustering procedure will provide a sufficiently efficient dimension reduction in order to improve the forecasting issues related to the cancer,  for instance  the prediction of the reaction of patients to a given treatment or  the life expectancy in terms of the transcriptomic diagnostic. \\
\noindent Second, from a biological point of view, the clusters of gene expression might be interpreted as distinct biological processes. Then, a way of measuring the quality of the VARCLUST  method is to compare the composition of the selected clusters with some biological pathways classification (see Figure \ref{fig:biological1}). More precisely, the goal is to check if the clusters constructed by VARCLUST correspond to already known biological pathways (Gene Ontology, \cite{geneontology}).

\subsubsection{Data extraction and gene annotations} 
 This ontological classification aims at doing a census of all described biological pathways. To grasp the subtleties inherent to biology, it is important to keep in mind that one gene may be involved in several biological pathways and that most of biological pathways are slot or associated with each other. The number of terms on per Biological process ontology was $29687$ in January 2019 while the number of protein coding genes is around $20000$. Therefore, one cannot consider each identified biological process  as independent characteristic.
 
 The RNASeq raw counts were extracted from the TCGA data portal. The scaling normalization and log transformation (\cite{robinson2010}) were computed using voom function (\cite{voom}) from limma package  version 3.38.3
(\cite{ritchie2015}). The gene annotation was realised with biomaRt package  version 2.38.0  (\cite{durink2005}, \cite{durink2009}).

The enrichment process aims to retrieve a functional profile of a given set of genes in order to better understand the underlying biological processes. Therefore, we compare the input gene set (i.e, the genes in each cluster) to each of the terms in the gene ontology. A statistical test can be performed for each bin to see if it is enriched for the input genes. It should {be mentioned} that all genes in the input genes may not be {retrieved} in the Gene Ontology Biological {P}rocess and conversely, all genes in the Biological Process may not be present in the input gene set.
To perform the GO enrichment analysis, we used GoFuncR package \cite{GOfuncR} version 1.2.0. Only Biological Processes identified with Family-wise Error Rate $\text{p-value} < 0.05$ were reviewed. Data processing and annotation enrichment were performed using \textsc{R} software version 3.5.2.

\begin{figure}[htbp] \centering
      \includegraphics[width=0.5\textwidth]{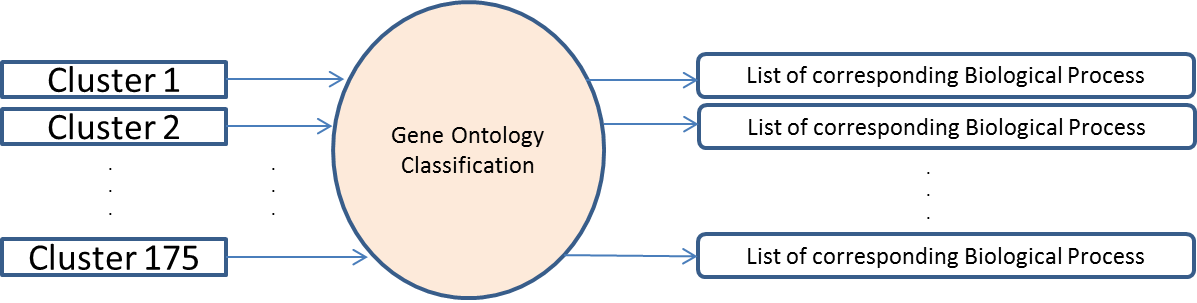}
      \caption{Bioinformatic annotation process for each cluster identified by VARCLUST}
      \label{fig:biological1}
\end{figure}

\subsubsection{Evolution of the mBIC and clusters strucure}
The number of clusters to test was fixed to $50$, $100$, $150$, $175$, $200$, $225$, $250$. The maximal subspace dimension was fixed to 8, the number of runs was 40, and the maximal number of iterations of the algorithm was 30.

As illustrated in the Figure 12, the mBICs remain stable from the 35th iteration.
% Fabien, je ne sais pas comment indexer le numÃ©ro de la figure automatiquement
% je ne sais pas comment interprÃ©ter le fait que le mBIC augmente doucement, mais continue Ã  augmenter avec le nombre d'itÃ©rations.
The mBIC is not $a.s.$ increasing between $50$ and $250$ clusters sets. The mBIC for $K=175$ and $K=250$ clusters sets were close. 
The proportion of clusters with only one principal component is also higher for $K=175$ and $K=250$ clusters sets.  
 
\begin{figure*} \centering
\begin{tabular}{c c c}
\includegraphics[width=0.33\textwidth]{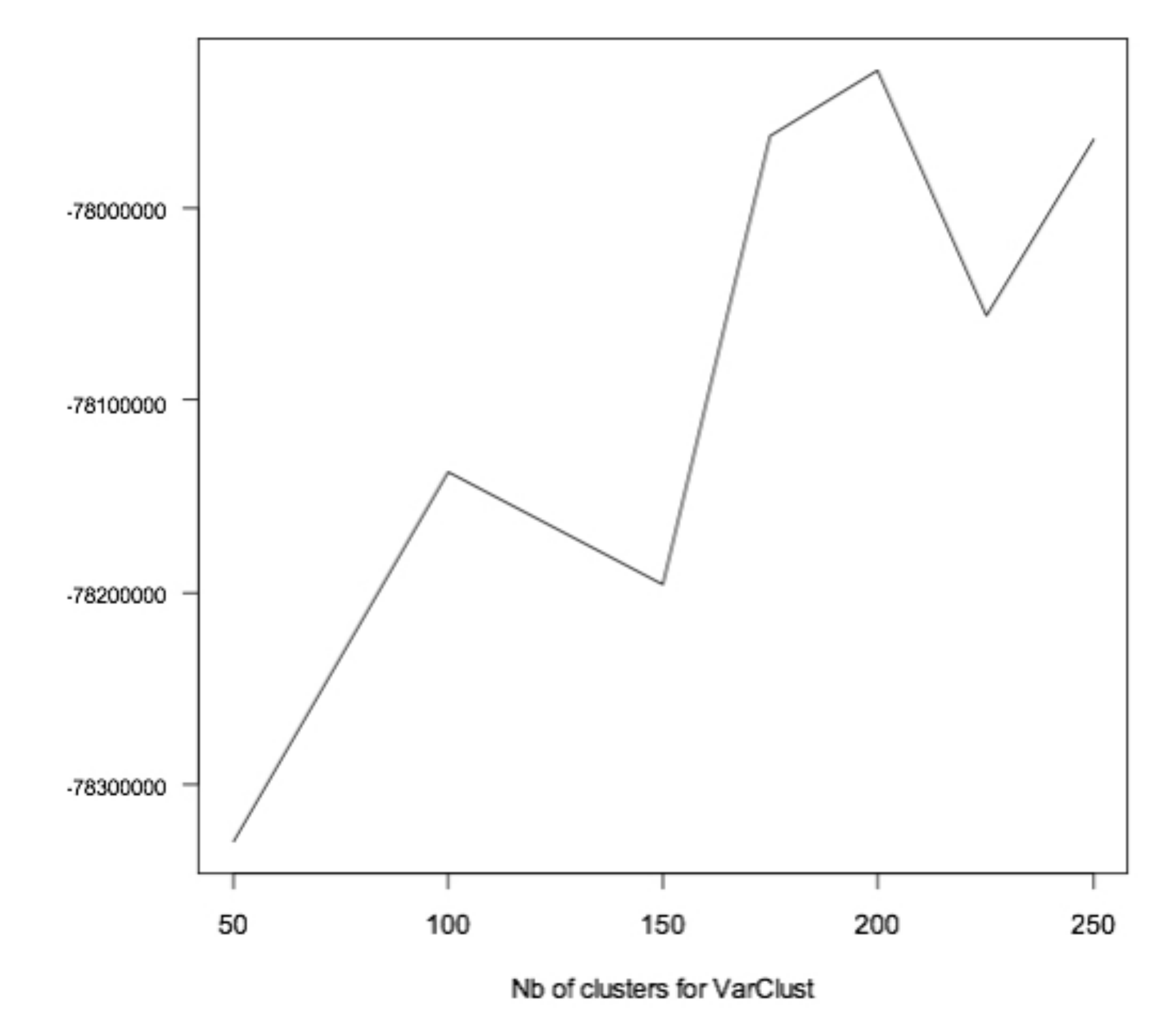} &
\includegraphics[width=0.33\textwidth]{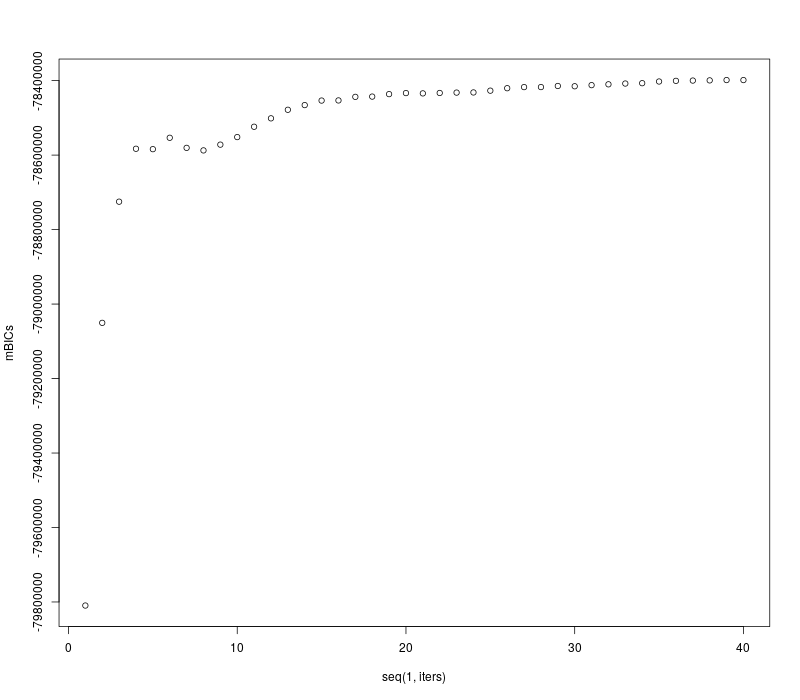} &
\includegraphics[width=0.33\textwidth]{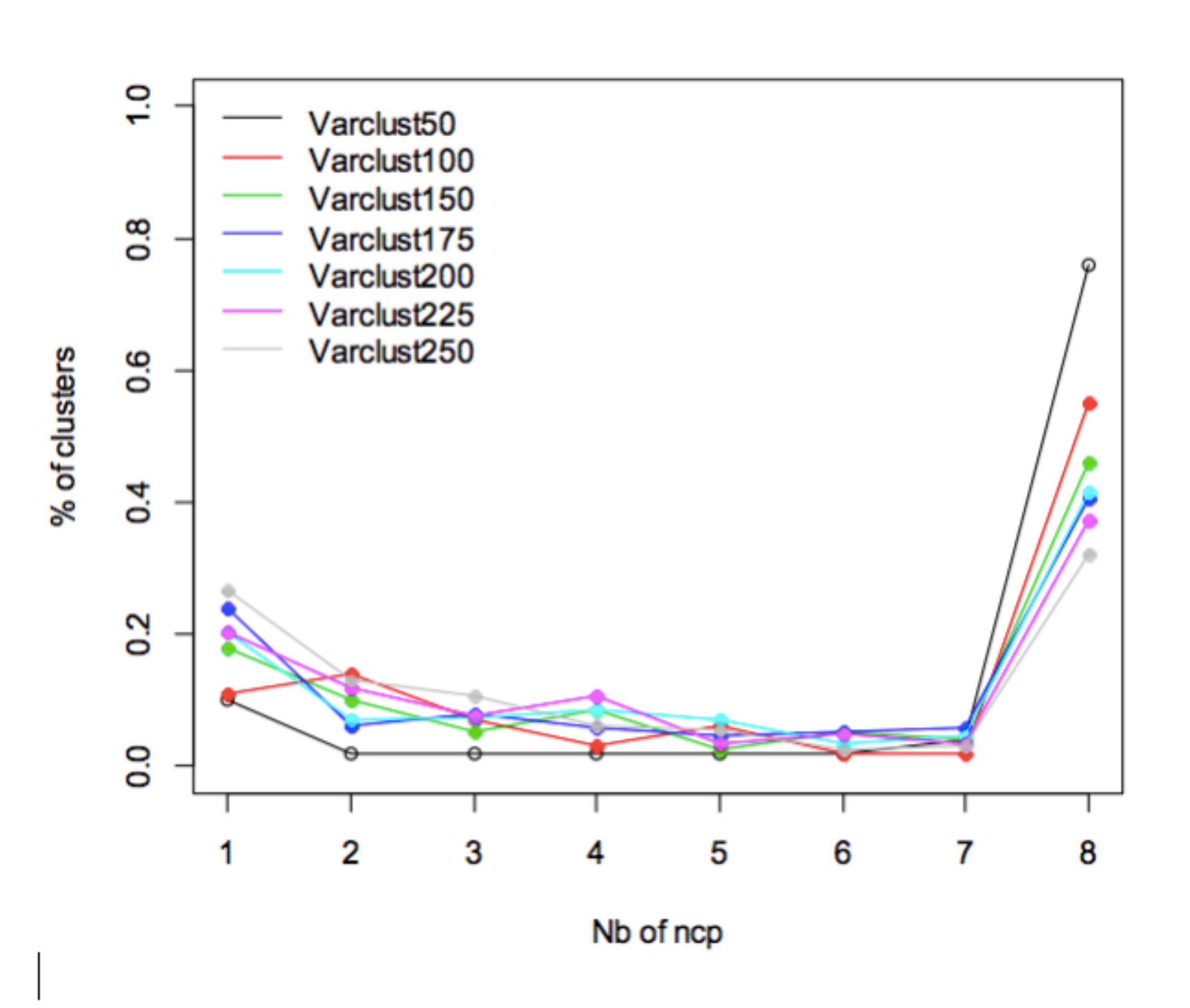}
\end{tabular}

 \caption{Left: evolution of the mBIC with the number of clusters; middle: evolution of the mBIC with the number of iterations; right: number of principal components in clusters  in terms of $K$.}
\label{mBICTCGA}
\end{figure*} 
 
%\begin{figure} \centering
%   \begin{minipage}[c]{.32\linewidth}
%      \includegraphics{mBIC_TCGA.pdf}
%   \end{minipage} \hfill
%   \begin{minipage}[c]{.32\linewidth}
%    \includegraphics{mbic175_40.png}
% \end{minipage}
%  \begin{minipage}[c]{.32\linewidth}
%    \includegraphics{ncp_TCGA.pdf}
% \end{minipage}
% \caption{Left: evolution of the mBIC with the number of clusters; middle: evolution of the mBIC with the number of iterations; right: number of principal components in clusters  in terms of $K$.}
%\label{mBICTCGA}
%\end{figure}

%\begin{figure}[htbp] \centering
   %\begin{minipage}[c]{.46\linewidth}
      %\includegraphics{numbergene.pdf}
   %\end{minipage} \hfill
  
% \caption{Left: Number of genes per cluster, $K=200$. Right: Number of principal components in clusters  in terms of $K$}
 %\label{numbergenecluster}
%\end{figure}

% Fabien, est-il possible de joindre la figure de droite aux 2 figures prÃ©cÃ©dentes?
% La figure de gauche n'est plus d'actualitÃ©

\subsubsection{Biological specificity of clusters}
%We hypothesized that clusters with few principal component might correspond to a potential delimitated biological entity. %Due to the complexity of sloted Biological Process, we focus on clusters with only one Biological Process ``GO-identified''. 
In this subsection, we focus on some biological interpretations in the case:  $K=175$ clusters.  
%(where the mBIC and the proportion of unidimensional clusters were high).\\

\noindent In order to illustrate the correspondance between the genes clustering and the
biological annotations in Gene Ontology, we have selected one cluster
with only one Gene Ontology Biological Process (Cluster number $3$) and
one cluster with two Gene Ontology Biological processes (Cluster
 number $88$). We keep this numbering notation in the sequel.
 
 %\begin{figure}[htbp] \centering
  % \begin{minipage}[c]{\linewidth}
%\includegraphics[width=5cm]{clusters_12.pdf}
%\end{minipage}
%\caption{List of clusters with one or two GO Biological processes.}
%\%label{clusterone}
%\end{figure}

\noindent Among the $98$ genes in Cluster $3$,  $70$ ($71.4$\%, called
``Specific Genes'') were reported in the  GO Biological process named
\textit{calcium-independent cell-cell adhesion via plasma membrane,
cell-adhesion molecules} ($GO:0016338$). The number of principal
components in this cluster was $8$ (which may indicate that one
Biological process has to be modeled using many components). Among the
$441$ genes in Cluster $88$,  $288$ ($65.3$\%) were reported in the
GO Biological processes named \textit{small molecule metabolic process} (
$GO:0044281$) and \textit{cell-substrate adhesion} ($GO:0031589$).  The number of
principal components in this cluster was also $8$.

\noindent To investigate whether the \textit{specific} genes, $i.e.$ involved in the GO biological process
 are well separated from \textit{unspecific} genes  (not involved in the GO biological process), we computed two standard PCAs in Clusters $3$ and $88$ separetely. As shown in Figure \ref{fig:PCAcluster}, the
separation is well done. 
%Three clusters had only one GO Biological Process identified (numbers $3$, $79$ and $110$). The pathways were calcium-independent cell-cell adhesion via plasma membrane, cell-adhesion molecules (GO:0016338), ``protein folding'' (GO:GO:0006457) and ``cornification'' (GO:0070268). In these clusters, the number of transcript genes identified in the corresponding GO Biological processes were 70/98 (71.4 \%),  10/251 (4.0 \%) and 10/21 (47.6 \%).
%
%The number of principal components in these clusters were 8, 8 and 6 respectively. This may indicate that one Biological process has to be modeled using many components.To investigate whether the genes involved in the GO biological process are well separated from genes present in the cluster but not in the GO biologcal process, we compute a standard PCA. In the cluster $3$, the separation is well done for the 2 principal components.

\begin{figure*}[htbp] \centering
\begin{tabular}{c c c}
	\includegraphics[width=0.49\textwidth]{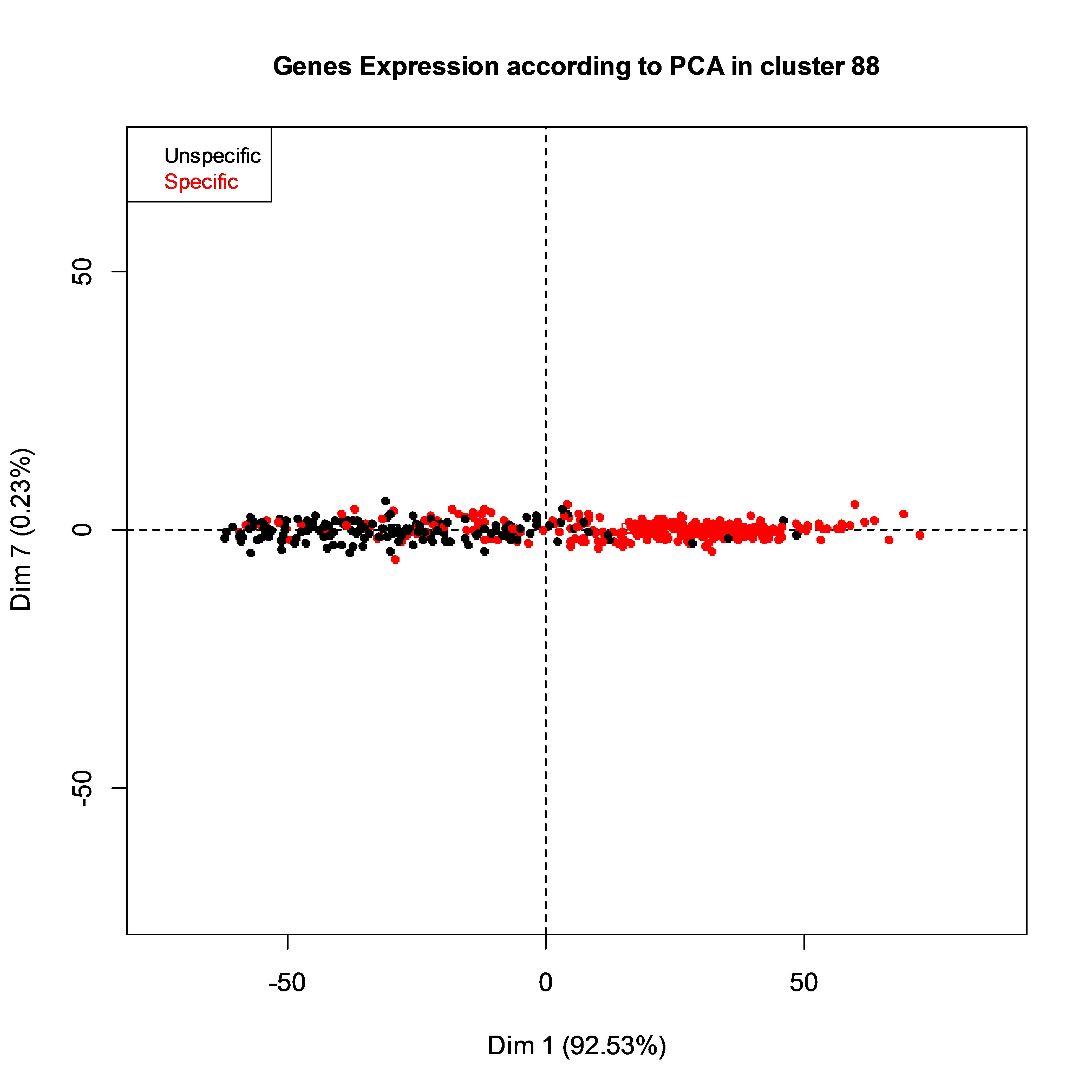} &
	\includegraphics[width=0.49\textwidth]{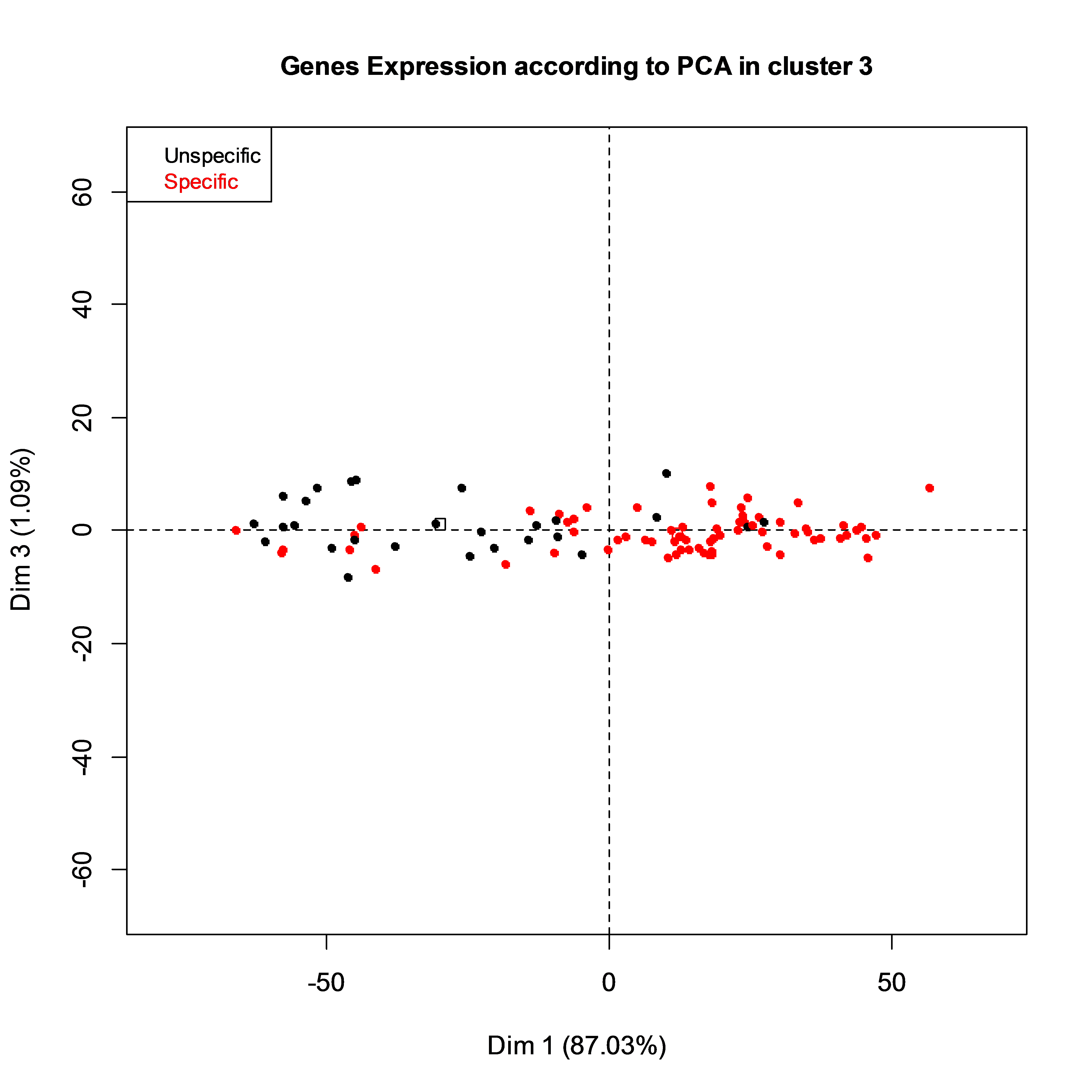} 
\end{tabular}
 \caption{Repartition of specific (red color)  and unspecific genes (black color) according to a standard PCA.}
\label{fig:PCAcluster}
\end{figure*}

\section{VARCLUST package}\label{sec:package}

The package \cite{sobczyk_varclust_package} is an R package that implements VARCLUST algorithm. To install it, run \verb| install.packages("varclust") | in R console. The main function is called \emph{mlcc.bic} and it provides estimation of:
\begin{itemize}
\item Number of clusters $K$
\item Clusters dimensions $\vec{k}$
\item Variables segmentation  $\Pi$
\end{itemize}
These estimators minimize modified BIC described in Section \ref{sec:math-model-varclust}.
\\
For the whole documentation use \verb|?mlcc.bic|. Apart from running VARCLUST algorithm using random initializations, the package allows for a hot start specified by the user.

%\subsection{Choosing values of parameters}

Information about all parameters can be found in the package documentation. Let us just point out few most important from practical point of view.
\begin{itemize}
\item If possible one should use multiple cores computation to speed up the algorithm.
 By default all  but one cores are used. User can override this with \emph{numb.cores} parameter
\item To avoid algorithm getting stuck in the local minimum one should run it with random initialization multiple times (see parameter \emph{numb.runs}). Default value is 20. 
{We advice to use as many runs as possible (100 or even more).}

\item We recommend doing a hot-start initialization with some non-random segmentation. 
Such a segmentation could be result of some expert knowledge or different clustering method e.g. SSC.  We explore this
  option in simulation studies.
\item  Parameter \emph{max.dim} should reflect how large  dimensions of clusters are expected to be. Default value is 4.
\end{itemize}

\section{Acknowledgements}
M. Bogdan, P. Graczyk, F. Panloup and S. Wilczy\'nski
thank the  AAP MIR 2018-2020 (University of Angers) for its support. P. Graczyk and F. Panloup are grateful  to  SIRIC  ILIAD  program  (supported  by  the  French National  Cancer  Institute  national  (INCa),  the  Ministry  of  Health  and  the  Institute  for  Health  and Medical Research)
 and to PANORisk program of the R\'egion Pays de la Loire
 for their support. F. Panloup  is also supported by the \textit{Institut de Canc\'erologie de l'Ouest}. M. Bogdan was also supported by the Polish National Center of Science via grant 2016/23/B/ST1/00454. M. Staniak was supported by the Polish National Center of Science grant 2020/37/N/ST6/04070.

\section{Appendix. Proof of the PESEL Consistency Theorem } \label{sec:appendix}

In the following we shall denote the  sample covariance matrix
$$S_n = \cfrac{(X - \bar{X}) ^T (X- \bar{X})}{n},$$ 
the
covariance matrix
$$ \Sigma_n = E \left( S_n \right)  = \cfrac{ M_{n\times p}^T M_{n\times p}}{n} + 
\frac{n-1}n \sigma^2  Id$$

and the heterogeneous PESEL function $F(n,k)$
\begin{align}\label{BIC_model_PPCA}
&F(n,k)=   \nonumber \\
&-\frac{n}{2}
 \underbrace{\left[\sum^k_{j=1} \ln(\lambda_j) + (p-k)\ln\left(\frac{1}{p-k}\sum_{j=k+1}^p \lambda_j\right) + p\ln(2\pi)+p\right]}_{G(k)} \nonumber \\
 & -\underbrace{\ln(n)\frac{pk- \frac{k(k+1)}{2} + k + p + 1}{2}}_{P(n,k)} 
\end{align}

\begin{proposition} \label{prop:lil_extended}
Let $E$ have i.i.d. entries with a normal distribution $\N(0,\sigma^2)$.
 There exists a constant  $C>1$ such that almost surely,
 \begin{align*}
 \textrm{ }\exists_{n_0}  \forall_{n\ge n_0} \quad &  \|\frac1n (E-\bar E)^T (E-\bar E)-\sigma^2 Id\|\le \\
 & C\frac{\sqrt{2\ln\ln n}}{\sqrt{n}}\ 
 \end{align*}

\end{proposition}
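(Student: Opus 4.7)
The plan is to reduce the matrix-level law of the iterated logarithm to finitely many scalar LILs applied entrywise. First, decompose
\[
\frac{1}{n}(E-\bar E)^T(E-\bar E)-\sigma^2\,Id
=\left(\frac{1}{n}\sum_{i=1}^n e_i e_i^T-\sigma^2\,Id\right)-\bar e\,\bar e^{\,T},
\]
where $e_i\in\mathbb{R}^p$ is the $i$-th row of $E$ (viewed as a column vector) and $\bar e=\frac1n\sum_{i=1}^n e_i$. Since $p$ is held fixed, all norms on $\mathbb{R}^{p\times p}$ are equivalent to the entrywise maximum; in particular $\|A\|\le p\max_{j,k}|A_{jk}|$. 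It therefore suffices to control each of the $p^2$ entries of the right-hand side by a suitable multiple of $\sqrt{2\ln\ln n}/\sqrt n$.

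For the first term, I would apply the Hartman--Wintner LIL (the general form cited from \cite{petrov1995limit}) to each entry separately. Diagonal entries are empirical means of the i.i.d. variables $e_{ij}^2-\sigma^2$, which have mean zero and variance $2\sigma^4$; off-diagonal entries are empirical means of the i.i.d. variables $e_{ij}e_{ik}$ with $j\neq k$, which have mean zero and variance $\sigma^4$. Finite second moments are trivially available because the $e_{ij}$ are Gaussian, so the LIL yields, for every fixed pair $(j,k)$,
\[
\limsup_{n\to\infty}\sqrt{\frac{n}{2\ln\ln n}}\left|\left(\frac1n\sum_{i=1}^n e_i e_i^T-\sigma^2\,Id\right)_{jk}\right|\le \sqrt 2\,\sigma^2\quad\text{a.s.}
\]
For the correction term $\bar e\,\bar e^{\,T}$, each coordinate $\bar e_j\sim\mathcal N(0,\sigma^2/n)$, and the standard LIL for partial sums of i.i.d. Gaussians gives $|\bar e_j|=O(\sqrt{\ln\ln n/n})$ a.s., whence $|\bar e_j\bar e_k|=O(\ln\ln n/n)=o(\sqrt{\ln\ln n/n})$, which is negligible at the target scale.

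To finish, I would intersect the finitely many (namely $p^2$ for the rank-one terms plus $p$ for the coordinates of $\bar e$) almost-sure LIL events -- a finite intersection of probability-one sets still has probability one -- and define $n_0(\omega)$ as the maximum of the associated thresholds. Choosing $C$ strictly larger than $p\sqrt 2\,\sigma^2$, to absorb both the norm-equivalence constant $p$ and a small slack in each $\limsup$, and also larger than $1$, delivers the claimed inequality simultaneously for every $n\ge n_0$. I expect no serious obstacle: the only point to verify carefully is that Hartman--Wintner applies to the products $e_{ij}e_{ik}$, which is immediate because its sole hypothesis is the finiteness of the variance, and these products are bounded in $L^2$.
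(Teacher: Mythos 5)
Your proposal is correct and follows essentially the same route as the paper: the same entrywise decomposition into $\frac1n\sum_i E_{ij}E_{ik}$ minus the product of column means, the Hartman--Wintner LIL applied to the finitely many scalar sequences, and the observation that the $\bar E_{\bullet j}\bar E_{\bullet k}$ term is of squared (hence negligible) order. You simply spell out the variance computations, the norm-equivalence constant, and the finite intersection of almost-sure events, which the paper leaves implicit.
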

\begin{proof}
It is a simple corollary of LLN and LIL.  
The term $jk$ of $ \frac1n (E-\bar E)^T (E-\bar E)$
equals
$$
 \frac1n (E_{\bullet j} -\bar E_{\bullet j} {\bf 1})^T (E_{\bullet k}-\bar E_{\bullet k}{\bf 1})= \frac1n \sum_{i=1}^n E_{ij} E_{ik} - \bar E_{\bullet j}\bar E_{\bullet k}.
$$
An upper bound of convergence of $\frac1n \sum_{i=1}^n E_{ij} E_{ik}$
to $\sigma^2 \delta_{jk}$ is $\frac{\sqrt{2\ln\ln n}}{\sqrt{n}}$. It is easy to show that an upper bound of convergence of
$\bar E_{\bullet j}\bar E_{\bullet k}$ to 0 is $ (\frac{\sqrt{2\ln\ln n}}{\sqrt{n}})^2 \le \frac{\sqrt{2\ln\ln n}}{\sqrt{n}}.$
\end{proof}

 \begin{proposition} \label{prop:LILX}
Let $E$ have i.i.d. entries with a normal law $\N(0,\sigma^2)$. There exists a constant  $C>1$ such that almost surely,
 \begin{align}\label{LILX}
 \textrm{ }\exists_{n_0} \ \forall_{n\ge n_0} \quad &  \|\frac1n (X-\bar X)^T (X-\bar X)- ({ L} +\sigma^2 Id)\|\le \nonumber \\
  & C\frac{\sqrt{2\ln\ln n}}{\sqrt{n}}\ 
 \end{align}

\end{proposition}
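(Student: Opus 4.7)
The plan is to expand $(X-\bar X)^T(X-\bar X)/n$ into three pieces---a pure signal term, a pure noise term, and a signal--noise cross term---and bound each at rate $\sqrt{2\ln\ln n}/\sqrt n$. First I would note the two structural simplifications built into Assumption \ref{as1}: since every row of $\mu$ is identical, centering annihilates $\mu$; and since $M$ is column-centered, $\bar M=0$ and $M^T\mathbf{1}_n=0$. Hence $X-\bar X = M+(E-\bar E)$, and the resulting orthogonality $M^T\bar E=0$ produces the clean decomposition
$$
\frac{(X-\bar X)^T(X-\bar X)}{n} \;=\; \frac{M^T M}{n} + \frac{M^T E + E^T M}{n} + \frac{(E-\bar E)^T(E-\bar E)}{n}.
$$

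Two of the three pieces are immediate: $\|M^TM/n-L\|\le C_1\sqrt{2\ln\ln n}/\sqrt n$ holds directly by \eqref{eq:pesel_consistency_M_assumption}, and $\|(E-\bar E)^T(E-\bar E)/n-\sigma^2 Id\|\le C_2\sqrt{2\ln\ln n}/\sqrt n$ is precisely Proposition \ref{prop:lil_extended}. The real work is the cross term $M^T E/n$. For each fixed pair $(j,k)$, the entry $(M^T E)_{jk}=\sum_{i=1}^n M_{ij}E_{ik}$ is a sum of independent centered Gaussians whose variances add up to $\sigma^2\sum_i M_{ij}^2\le \sigma^2 B^2 n$, using the uniform bound $B=\sup_{n,i,j}|M_{ij}|<\infty$ from Assumption \ref{as1}. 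The Hartman--Wintner law of the iterated logarithm (in the general form of \cite{petrov1995limit} already invoked for Proposition \ref{prop:lil_extended}) then gives
$$
\limsup_{n\to\infty}\frac{|(M^T E)_{jk}|}{\sqrt{2\sigma^2 B^2 n \ln\ln n}}\le 1 \quad \text{a.s.},
$$
so $|(M^T E)_{jk}|/n \le C_3\sqrt{2\ln\ln n}/\sqrt n$ almost surely for $n$ large. Since $p$ is fixed, intersecting the $p^2$ resulting almost-sure events and bounding the operator norm of a $p\times p$ matrix by a constant multiple of its entrywise maximum yields the same rate for $\|M^T E/n\|$.

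The triangle inequality with $C:=C_1+C_2+2C_3$ (enlarged if necessary to satisfy $C>1$) then combines the three bounds into \eqref{LILX}. The main obstacle is the cross term: a CLT-type argument would only give the bound in probability, whereas the statement demands a single almost-sure event on which the inequality holds simultaneously for all large $n$. That is exactly what the LIL buys us, and its applicability depends crucially on the uniform-boundedness of the entries of $M$ built into Assumption \ref{as1}.
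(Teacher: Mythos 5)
Your proof is correct and follows essentially the same route as the paper: the same decomposition of $\frac1n(X-\bar X)^T(X-\bar X)$, with the hypothesis \eqref{eq:pesel_consistency_M_assumption} handling $\frac1n M^TM$, Proposition \ref{prop:lil_extended} handling the centered noise term, and Petrov's generalized LIL handling the entries of the cross term $\frac1n M^TE$. The only difference is that you observe $M^T\bar E=0$ (since the columns of $M$ sum to zero), which kills the fourth term outright, whereas the paper keeps $-\frac1n(M^T\bar E+\bar E^TM)$ and bounds it by Cauchy--Schwarz plus the LIL for $\bar E_{\bullet j}$ --- a tidy simplification, but not a different argument.
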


\begin{proof}

It is easy to check that
$ X-\bar X= M+E- \bar E.$ 
We write
\begin{align*}
\frac1n (X-\bar X)^T (X-\bar X)=
\frac1n  M^T M + \frac1n (E-\bar E)^T (E-\bar E) \\
+ \frac1n (M^TE+E^TM)-\frac1n (M^T\bar E+\bar E^TM)
\end{align*}

To the first two terms we apply, respectively, the hypothesis \eqref{eq:pesel_consistency_M_assumption} and the Proposition \ref{prop:lil_extended}.

To prove the right pace of convergence of the third term $\frac1n (M^TE+E^TM)$ we consider every term $ (M^TE)_{ij}= \langle M_{{\bullet } i}, E_{ {\bullet }j}\rangle$ for which we use a generalized version of Law of Iterated Logarithm from \cite{petrov1995limit}.
Its assumptions are trivially met for random variables $$ M_{li} E_{lj} \sim \N(0, M_{l i}^2 \sigma^2)$$
as they are Gaussian and $\frac{B_{n+1}}{B_n} = \frac{n+1}{n} \to 1$, where $B_n$ is defined as   $B_n = \sum_l M_{li}^2 \sigma^2$.
% = M_{\cdot, i}^T M_{\cdot, i}  \sigma^2 = n \tilde{M}_{i,i} \sigma^2$.
 Then, by \cite{petrov1995limit}, the following holds

$$ \limsup_{n \to \infty} \frac{\sum_{l} M_{l i} E_{lj}} { \sqrt{2 B_n \log \log B_n}}=1 \quad a.s.$$

The fourth term $\frac1n (M^T\bar E+\bar E^TM)$ is treated using Cauchy-Schwarz inequality:

\begin{align*}
|(\frac1n M^T\bar E)_{ij}| & =\frac1n |\langle M_{\bullet i},\bar E_{\bullet j} \rangle| \\
& \le \frac1n  \|   M_{\bullet i}  \| \|  \bar E_{\bullet j}  \| = \frac1n  \|   M_{\bullet i}  \|
\sqrt{n\ \overline{E_{\bullet j}}^2} \\
& = \frac1{\sqrt{n}} \|   M_{\bullet i}  \|
| \overline{E_{\bullet j}}|.
\end{align*}

By LIL, $| \overline{E_{\bullet j}}|\le C \frac{\sqrt{2\ln\ln n}}{\sqrt{n}}$.  The square of the first term $ (\frac1{\sqrt{n}} \|   M_{\bullet i}  \|)^2$ converges to a finite limit by the assumption \eqref{eq:pesel_consistency_M_assumption}.
\end{proof}

%
% LEMMA MAX DIFFERENCE OF EIGENVALUES
%
%
%
%

\begin{lemma}\label{lemmaLILeigen}
There exists $C'>0$ such that \textrm{almost surely},
\begin{align}\label{LILeigen}
 \exists n_0\ \forall n\ge n_0\quad   
\| \lambda(S)- \lambda(\Sigma)  \|_{\infty} \le  C'\frac{\sqrt{2\ln\ln n}}{\sqrt{n}}, 
 \end{align}
 where $\vS$ is sample covariance matrix for data drawn according to model~\eqref{eq:model_for_consistency}, $\Sigma$ is its expected value and function $\lambda(\cdot)$ returns sequence of eigenvalues.
 \end{lemma}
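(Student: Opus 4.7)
The plan is to reduce the eigenvalue bound to a matrix-norm bound via a standard eigenvalue perturbation inequality, and then invoke the matrix-norm estimates already proved in Proposition \ref{prop:LILX} together with the hypothesis \eqref{eq:pesel_consistency_M_assumption}.

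Concretely, I would proceed as follows. Both $S_n$ and $\Sigma_n$ are symmetric (in fact positive semi-definite) $p\times p$ matrices, so by Weyl's perturbation theorem for Hermitian matrices, if $\lambda(A)$ denotes the non-increasing vector of eigenvalues, then
\begin{equation*}
\|\lambda(S_n)-\lambda(\Sigma_n)\|_\infty \le \|S_n-\Sigma_n\|,
\end{equation*}
where $\|\cdot\|$ is the operator (spectral) norm. Thus it suffices to control $\|S_n-\Sigma_n\|$ at the rate $\sqrt{2\ln\ln n}/\sqrt n$, almost surely, for $n$ large.

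To get such a bound, I would introduce the reference matrix $L+\sigma^2 Id$ and apply the triangle inequality
\begin{equation*}
\|S_n-\Sigma_n\|\le \|S_n-(L+\sigma^2 Id)\|+\|\Sigma_n-(L+\sigma^2 Id)\|.
\end{equation*}
The first term is bounded by $C\,\sqrt{2\ln\ln n}/\sqrt n$ almost surely for $n$ large, by Proposition \ref{prop:LILX}. For the second term, recall that $\Sigma_n=\tfrac1n M^TM+\tfrac{n-1}{n}\sigma^2 Id$, hence
\begin{equation*}
\Sigma_n-(L+\sigma^2 Id)=\Bigl(\tfrac1n M^TM-L\Bigr)-\tfrac{\sigma^2}{n}Id.
\end{equation*}
The first summand is bounded in operator norm by $C\,\sqrt{2\ln\ln n}/\sqrt n$ by Assumption~\ref{as1}, specifically \eqref{eq:pesel_consistency_M_assumption} (passing from the Frobenius-type bound used there to the operator norm costs only a dimension-dependent constant, which is fine since $p$ is fixed). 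The second summand is $O(1/n)$, which is absorbed into the same rate. Combining the two bounds and choosing $C'$ large enough yields \eqref{LILeigen}.

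There is no real obstacle: the only point worth being careful about is the choice and equivalence of matrix norms (operator versus Frobenius) when passing between \eqref{eq:pesel_consistency_M_assumption}, Proposition \ref{prop:LILX}, and Weyl's inequality, but since $p$ is fixed throughout this section, this is merely a constant adjustment and the claim follows.
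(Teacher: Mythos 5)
Your proposal is correct and follows essentially the same route as the paper: bound $\|\lambda(S_n)-\lambda(\Sigma_n)\|_\infty$ by the matrix norm of the difference via Weyl's perturbation inequality (the paper cites Theorem A.46 of Bai--Silverstein for exactly this), then split $\|S_n-\Sigma_n\|$ through the reference matrix $L+\sigma^2 Id$ and control the two pieces by Proposition \ref{prop:LILX} and by assumption \eqref{eq:pesel_consistency_M_assumption}. Your treatment of the second piece is in fact slightly more explicit than the paper's, since you write out the residual $-\tfrac{\sigma^2}{n}Id$ coming from $\Sigma_n=\tfrac1n M^TM+\tfrac{n-1}{n}\sigma^2 Id$ and note it is absorbed into the rate.
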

\begin{proof}
 
~\\
\noindent  Observe that 
\begin{align*}
&\norm{ \cfrac{(X - \overline{X}) ^T  (X- \bar X)}{n}  -\Sigma }_{\infty}  \\
& \leq \|\frac1n (X-\bar X)^T (X-\bar X)- ({ L} +\sigma^2 Id)\|  \\
& +\| ({ L} +\sigma^2 Id) - \Sigma \|
\end{align*}

We apply  Proposition \ref{prop:LILX} to the first term and   the assumption \eqref{eq:pesel_consistency_M_assumption} to the second one.

\noindent Inequality \eqref{LILeigen} holds because \eqref{LILX} holds and, by Theorem A.46(A.7.3) from \cite{bai}, when $A,B$ are symmetric, it holds 
\begin{equation*}
\max_k|\lambda_k(A)-\lambda_k(B)| \le \|A-B\|,
\end{equation*}
where function $\lambda_k(\cdot)$ denotes the $k^{\text{th}}$ eigenvalue in the non-increasing order.
\end{proof} 

\begin{proof}[Proof of Theorem \ref{thm:pesel}]

~\\

Let $\epsilon_n= \max^i|\lambda_i(S_n)-\lambda_i(L)|$.
From Lemma \ref{lemmaLILeigen} we have $\lim_n\epsilon_n=0 $ almost surely, so for $k\le k_0-1$, for almost all samplings,  there exists $n_0$ such that if $n\ge n_0,$
\begin{equation*}
\epsilon_n<\sigma^2\ \textrm{and}\ \epsilon_n < \frac14
\min_{k\le k_0-1} c_k(\gamma),
\end{equation*}
where $c_k(\gamma)=\gamma_{k+1}-\frac{\sum_{k+2}^p \gamma_i}{p-k-1}>0$.

\noindent We study the sequence of non-penalty terms $G(k)$ (see \eqref{BIC_model_PPCA}). For  simplicity, from now on, we use notation  $\lambda_j=\lambda_j(S_n)$. We consider $G(k)-G(k+1)$ thus getting rid of the minus sign.

\begin{align*}
G(k) &- G(k+1) = \\
&=\ln\lambda_{k+1}+(p-k-1)\ln \frac{\sum_{k+2}^p \lambda_j}{p-k-1}  \\
& - (p-k)\ln \frac{\sum_{k+1}^p \lambda_j}{p-k}  \\
&=\ln\lambda_{k+1} - \ln \frac{\sum_{k+2}^p \lambda_j}{p-k-1} \\
& + (p-k) \left[\ln \frac{\sum_{k+2}^p \lambda_j}{p-k-1} - \ln \frac{\sum_{k+1}^p \lambda_j}{p-k}\right]
\end{align*}

\noindent  Let us now denote $a=\lambda_{k+1}$ and $b=\frac{\sum_{k+2}^p \lambda_j}{p-k-1}$. Then the above becomes:

\begin{align*}
\ln a - \ln b + (p-k) \left[ \ln b - \ln \frac{b (p-k-1) + a}{p-k}\right]
\end{align*}

{\bf Case $k\le k_0-1.$}

\noindent We will use notation as above and exploit concavity of $\ln$ function by taking Taylor expansion at point $x_0$
%%%%%%%%%%%%%%%%%%%%%%%%%%%%%%%%%%%%%%%%%%%%%%%%%%%%
\begin{equation*}
    f(x) = f(x_0) + f^\prime(x_0) (x-x_0) + \frac{f^{\prime\prime} (x^\star)}{2} (x-x_0)^2,
\end{equation*}
where $x^\star \in (x, x_0).$

\noindent Let $x_0 = \theta x_1 + (1-\theta) x_2$ and $x=x_1$. Then
\begin{align*}
     f(x_1) &= f(x_0) + f^\prime(x_0) (1- \theta) (x_1-x_2) \\
     &+ \frac{f^{\prime\prime} (x^\star_1)}{2} (1- \theta)^2 (x_1-x_2)^2.
\end{align*}

\noindent Similarly, we take $x=x_2$, multiply both
equations by $\theta$ and $1-\theta$ respectively and sum them up. We end up with the formula
%%%%%%%%%%%%%%%%%%%%%%%%%%%%%%%%%%%%%%%%%%%%%%%
\begin{align*}
&    \theta f(x_1) + (1-\theta) f(x_2) =  \\
 &     f(x_0) + \theta (1- \theta) (x_2-x_1)^2 \left[ \frac{f^{\prime\prime} (x^\star_1)}{2} (1- \theta) +  \frac{f^{\prime\prime} (x^\star_2)}{2} \theta \right].
\end{align*}

\noindent In our case $f^{\prime\prime}(x) = -\frac{1}{x^2}$, which means that 
$\frac{f^{\prime\prime} (x^\star_i)}{2} < \frac{f^{\prime\prime} (x_2)}{2}$
because $x^\star_1 \in (x_1, x_0) < x_2$ and  $x^\star_2 \in (x_0, x_2) < x_2$.
This yields
\begin{align}
    &\theta f(x_1) + (1-\theta) f(x_2) - f(x_0) = \nonumber \\ 
    &\theta (1- \theta) (x_2-x_1)^2 \left[ \frac{f^{\prime\prime} (x^\star_1)}{2} (1- \theta) +  \frac{f^{\prime\prime} (x^\star_2)}{2} \theta \right]  \label{eq:inequality_determ_pesel_theta} \\
    &< \theta (1- \theta) (x_2-x_1)^2 \frac{f^{\prime\prime} (x_2)}{2}\nonumber
\end{align}

Now, going back to $G(k)$, we set 
\begin{equation}\label{X1X2}
 x_1 = b = \frac{\sum_{k+2}^p \lambda_j}{p-k-1}, \quad
x_2 =  a = \lambda_{k+1}, \quad
\theta = 1-\frac{1}{p-k}.
\end{equation}

\noindent By multiplying both sides of \eqref{eq:inequality_determ_pesel_theta} by $p-k$ we get

\begin{align*}
    &(p-k-1) \ln \left( \frac{\sum_{k+2}^p \lambda_j}{p-k-1} \right) + \ln (\lambda_{k+1}) \\
    & \; - (p-k) \ln \left( (1-\frac{1}{p-k}) \frac{\sum_{k+2}^p \lambda_j}{p-k-1} + \frac{1}{p-k} \lambda_{k+1} \right) \\
    &< - \left(1- \frac{1}{p-k} \right) \left( \lambda_{k+1}-\frac{\sum_{k+2}^p \lambda_j}{p-k-1} \right)^2 \frac{1}{2 \lambda_{k+1}^2}
\end{align*}

\noindent So, using $k+1 \leq k_0$ in the last inequality, we get
%%%%%%%%%%%%%%%%%%%%%%%%%%%%%%%%%%%%%%%%%%%%%%%%%
\begin{align*}
    G(k+1) & - G(k) > \\
    & \left(1- \frac{1}{p-k} \right) \left( \lambda_{k+1}-\frac{\sum_{k+2}^p \lambda_j}{p-k-1} \right)^2 \frac{1}{2 \lambda_{k+1}^2} \nonumber \\
     &= \frac{p-k-1}{p-k} \left( \lambda_{k+1}-\frac{\sum_{k+2}^p \lambda_j}{p-k-1} \right)^2 \frac{1}{2 \lambda_{k+1}^2} \nonumber \\
     &{>} \; \frac{p-k_0-1}{p-k_0} \left( \lambda_{k+1}-\frac{\sum_{k+2}^p \lambda_j}{p-k-1} \right)^2 \frac{1}{2 \lambda_{1}^2}
\end{align*}

From Lemma \ref{lemmaLILeigen}, $\lambda_i\in [\gamma_i+\sigma^2-\epsilon_n,\gamma_i+\sigma^2+\epsilon_n]$, where  $\epsilon_n$ goes  to 0 and

\begin{align*}
& \left( \lambda_{k+1}-\frac{\sum_{k+2}^p \lambda_j}{p-k-1} \right) \ge \\
& \; \; \ge \gamma_{k+1}+\sigma^2-\epsilon_n -\frac{\sum_{k+2}^p (\gamma_i+\sigma^2+\epsilon_n)}{p-k-1} \\
& \; \; = c_k(\gamma) - 2\epsilon_n\ge \min_{k\le k_0-1} c_k(\gamma) - 2\epsilon_n>0
\end{align*}

for some constants $c_k(\gamma)$. Thus
\begin{align*}
    G&(k+1) - G(k) \\
    & > \frac{p-k_0-1}{p-k_0} (\min_{k\le k_0-1} c_k(\gamma) - 2\epsilon_n)^2\frac{1}{2(\gamma_1+\sigma^2+\epsilon_n)^2} \\
    & >\frac{C'}{2}\min_{k\le k_0-1} c_k(\gamma)>C>0
\end{align*}
where $C,C'$ are constants independent of $k$ and $n$.
It follows that for $n$
 large enough 
\begin{align*}
\frac{n}{2}[G(k+1)-G(k)] & \\
& \ge \frac{n}{2} C \\
& \gg \frac{\ln n}{2} (p-k) \\
& = P(n,k+1)-P(n,k).
\end{align*}

This implies that the PESEL  function $F(n,k)=
\frac{n}{2}G(k)-P(n,k)$ is strictly increasing for $k\le k_0$.
\\

%%%%%%%%%%%%%%%%%%%%%%%%%%%%%%%%%%%%%%%%%%%%%%%

{\bf Case $k\ge k_0.$}
By Lemma  \ref{lemmaLILeigen} we have that, for almost all samplings,  there exists $n_0$  such that if $n\ge n_0,$
\begin{align*}
\epsilon_n \le  C\frac{\sqrt{2\ln\ln n}}{\sqrt{n}}\ 
\textrm{and }  \epsilon_n<\frac12\sigma^2.
 \end{align*}

 We apply the formula \eqref{eq:inequality_determ_pesel_theta} and as before, we use the notations \eqref{X1X2}.
It yields
 \begin{align*}
G(k+1)-G(k)&\le \left( 1-\frac{1}{p-k}\right)\left(
\lambda_{k+1}-\frac{\sum_{k+2}^p \lambda_j}{p-k-1} \right)^2 \frac{1}{2b^2} \\
& \le (\lambda_{k+1}-b)^2 \frac{1}{2b^2}\\
&\le  (|\lambda_{k+1}-\sigma^2|+ |\sigma^2-b|)^2 \frac{1}{2b^2}
\\
 &\le  (|\lambda_{k+1}-\sigma^2|+ \frac{\sum_{k+2}^p |\sigma^2-\lambda_j|}{p-k-1})^2 \frac{1}{2b^2}\\
 &\le 4\epsilon_n^2 \frac{1}{2(\sigma^2-\epsilon_n)^2}
 \le  C^2\frac{{2\ln\ln n}}{{n}} \frac{4}{2\sigma^4} \\
& =C'\frac{{\ln\ln n}}{{n}}
 \end{align*}
 and consequently
 $$
 \frac{n}{2} \left[ G(k+1)-G(k) \right] \le C'' \ln\ln n
 $$
 
 \noindent Recall that the PESEL function equals $F(n,k)=
\frac{n}{2}G(k)-P(n,k)$.  
 The increase of $\frac{n}{2}G(k)$ is smaller than the rate $\ln\ln n$, while the increase of penalty  
 $P(n,k+1)-P(n,k)=\frac{\ln n}{2} (p-k)$ is of rate $\ln n$. Consequently, there exists
$n_1$ such that for $n > n_1$, the PESEL function is strictly decreasing for $k\ge k_0$
with probability 1.

\noindent We saw in the first part of the proof that
 the PESEL  function $F(n,k)$ is strictly increasing for $k\le k_0$, for $n$ big enough.
 It implies that with probability 1, there exists $n_2$ such that for $n>n_2$ we have $\hat k_0(n)= k_0$.
\end{proof}

% Authors must disclose all relationships or interests that 
% could have direct or potential influence or impart bias on 
% the work: 
%
% \section*{Conflict of interest}
%
% The authors declare that they have no conflict of interest.

% BibTeX users please use one of
%\bibliographystyle{spbasic}      % basic style, author-year citations
\bibliographystyle{plain}      % mathematics and physical sciences
\bibliography{bibliography}

\end{document}